\numberwithin{equation}{section}
\newtheorem{Theorem}{Theorem}[section]
\newtheorem{Corollary}[Theorem]{Corollary}
\newtheorem{Proposition}[Theorem]{Proposition}
 { \theoremstyle{definition}
\newtheorem{Definition}[Theorem]{Definition}
 }
\begin{document}


\newcommand{\arXivNumber}{1607.01626}

\renewcommand{\PaperNumber}{098}

\FirstPageHeading

\ShortArticleName{Variational Tricomplex, Global Symmetries and Conservation Laws of Gauge Systems}

\ArticleName{Variational Tricomplex, Global Symmetries\\ and Conservation Laws of Gauge Systems}

\Author{Alexey A.~SHARAPOV}

\AuthorNameForHeading{A.A.~Sharapov}

\Address{Physics Faculty, Tomsk State University, Lenin ave.~36, Tomsk 634050, Russia}
\Email{\href{mailto:sharapov@phys.tsu.ru}{sharapov@phys.tsu.ru}}

\ArticleDates{Received July 12, 2016, in f\/inal form September 30, 2016; Published online October 03, 2016}

\Abstract{Using the concept of variational tricomplex endowed with a presymplectic structure, we formulate the general notion of symmetry. We show that each generalized sym\-metry of a gauge system gives rise to a sequence of conservation laws that are represented by on-shell closed forms of various degrees. This extends the usual Noether's correspondence between global symmetries and conservation laws to the case of lower-degree conservation laws and not necessarily variational equations of motion. Finally, we equip the space of conservation laws of a given degree with a Lie bracket and establish a homomorphism of the resulting Lie algebra to the Lie algebra of global symmetries.}

\Keywords{variational bicomplex; BRST dif\/ferential; presymplectic structure; lower-degree conservation laws}

\Classification{70S10; 81T70; 83C40}

\section{Introduction}

 In this paper, we continue our study of the variational tricomplex and its applications initiated in \cite{Sh1}. Loosely, the variational tricomplex may be viewed as the standard variational bicomplex \cite{Anderson,Dickey,Olver,Saunders} endowed with one more coboundary operator, namely, the classical BRST dif\/ferential. The BRST dif\/ferential carries an exhaustive information about the classical equations of motion, their gauge symmetries and identities. Although the BRST theory is commonly regarded as a tool for quantizing gauge theories \cite{HT}, the classical BRST dif\/ferential, as such, has nothing to do with quantization: to quantize a classical theory one or another extra structure is needed. In the context of the variational tricomplex such an extra ingredient is most naturally identif\/ied with a BRST-invariant presymplectic structure. Depending on the formalism one uses to describe classical dynamics, dif\/ferent kinds of objects can be identif\/ied as presymplectic structures. Within the Lagrangian formalism, for example, the presymplectic structure appears as an odd symplectic form underlying the BV bracket on the space of f\/ields and antif\/ields. In the BFV formalism of constrained Hamiltonian systems the same presymplectic structure reincarnates as the canonical symplectic structure on the extended phase space. As was shown in~\cite{Sh1}, the concept of variational tricomplex provides a uniform geometrical description of all these reincarnations, maintaining an explicit space-time covariance even in the Hamiltonian picture of dynamics. In particular, it allows one to pass directly from the BV to BFV formalism at the level of the BRST charge and master action; in so doing, the whole spectrum of BFV f\/ields and the presymplectic structure are generated immediately from those of the BV theory. Moreover, with due def\/inition of the BRST dif\/ferential \cite{KazLS,LS0} the concept of variational tricomplex extends beyond the scope of Lagrangian dynamics.

In the present paper, we focus upon the issues of global symmetries, conservation laws and interrelation between them. In mathematical terms the conservation laws are described by dif\/ferential forms on an $n$-dimensional space-time manifold. The coef\/f\/icients of these forms are assumed to be given by smooth functions of f\/ields and their derivatives and the forms are required to be closed by virtue of equations of motion, that is, on-shell. Two conservation laws are considered as equivalent if they are represented by on-shell cohomologous dif\/ferential forms. The degree of a conservation law is by def\/inition the degree of a form it is represented by. Since the $n$-forms are automatically closed it makes sense to consider on-shell closed forms of degrees less or equal to $n-1$. These constitute the so-called characteristic cohomology of the system. The ``ordinary'' conservation laws have the maximal degree $n-1$, while those of degree $< n-1$ are usually referred to as the lower-degree conservation laws\footnote{In physics it is customary to describe the conservation laws by polyvectors rather than forms. Since the passage from forms to polyvectors involves the Hodge dualization w.r.t.\ to some background metric, the higher the form-degree, the lower the polyvector-degree and vice versa. Correspondingly, the lower-degree conservation laws from the viewpoint of forms become of higher-degree in terms of polyvectors.}. The typical example of the top-degree conservation law is provided by the energy-momentum tensor of the electromagnetic f\/ield, which in actuality represents four independent conserved quantities. Here we also encounter the lower-degree conservation law represented by the Hodge dual of the strength $2$-form. The latter owes its existence to the gauge invariance of the electromagnetic potentials and expresses Gauss' law. Various results on lower-degree conservation laws, obtained by variety of techniques, can be found in \cite{BBH,BG,HKS,T,Tj,Ver,V,V+}.

The notion of conservation law is closely related to the idea of symmetry. Indeed, each top-degree conservation law of a Lagrangian system def\/ines and is def\/ined by a global symmetry of the action functional. This is the precise content of Noether's f\/irst theorem on the link between symmetries and conservation laws \cite{KS}. The nature of the lower-degree conservation laws is somewhat dif\/ferent. As is well known \cite{BBH}, they owe their origin to (the special structure of) gauge symmetries, rather than to global invariance. The presence of gauge symmetries is thus a necessary but not a suf\/f\/icient condition for the existence of lower-degree conservation laws. Due to the second Noether's theorem no ordinary conserved current corresponds to the gauge invariance of the action \cite{KS}. We have to conclude that the top- and lower-degree conservation laws are quite dif\/ferent things when viewed from the perspective of the conventional Lagrangian formalism.

This dif\/ference disappears entirely within the variational tricomplex approach, where the action functional is substituted by the classical BRST dif\/ferential and the BRST-invariant presymplectic form. The global symmetries are then naturally identif\/ied with the inf\/initesimal transformations that leave invariant either structure. We show that each symmetry, being def\/ined in such a way, gives rise to a sequence of conservation laws of decreasing degree. This allows us to treat the top- and lower-degree conservation laws on equal footing, i.e., as a manifestation of global symmetries. Furthermore, using the notion of a descendent presymplectic structure \cite{Sh1}, we are able to endow the space of conservation laws of a given degree with a Lie bracket. In top-degree, this bracket reproduces the Dickey bracket in the space of conserved currents \cite{Dickey}. By construction, the Lie algebras of conservation laws come equipped with homomorphisms to the Lie algebra of original global symmetries and one may regard these homomorphisms as an extension of the f\/irst Noether's theorem to the case of lower-degree conservation laws. For Lagrangian gauge systems in the BV-BRST formalism such a connection between higher symmetries and lower-degree conservation laws was established in \cite{BBH}.

Unif\/ication of top- and lower-degree conservation laws is not the only advantage of our approach. The chief value of the concept of variational tricomplex is that it equally well applies to non-Lagrangian theories. In general, the existence of a compatible presymplectic structure imposes less restrictions on the classical dynamics than the existence of a Lagrangian. Among recent examples of this kind let us mention the derivation of conserved currents for the non-Lagrangian equations of motion governing the dynamics of massless higher-spin f\/ields \cite{Sh2}. It should be noted that one and the same system of classical equations of motion may admit, in principle, several inequivalent presymplectic structures. Not only do these presymplectic structures lead to dif\/ferent quantizations, but they also lead to dif\/ferent links between symmetries and conservation laws in the classical theory. It might be well to point out in this connection that another generalization of the f\/irst Noether's theorem to non-Lagrangian gauge theories was proposed in \cite{KLS1,KLS2}.

The paper is organized as follows. In the next section, we review the concepts of variational tricomplex and presymplectic structure. Here we also recall the notion of a descendent gauge system~\cite{Sh1}, which is basic to our subsequent considerations. In Section~\ref{section3}, we introduce and study the notions of physical observables, (Hamiltonian) symmetries and conservation laws. Among other thing we show that each Hamiltonian symmetry originates from a physical observable and the latter gives rise to a sequence of conservation laws of various degrees. In Section~\ref{section4}, we slightly relax the def\/ining conditions for a Hamiltonian symmetry and this enables us to endow the space of conservation laws with the structure of a Lie algebra. The corresponding Lie bracket is determined by the descendent presymplectic structure. In Section~\ref{section5}, the general formalism is illustrated by three examples of physical interest: Maxwell's electrodynamics, the Chern--Simons theory, and the linearized gravity in the vierbein formalism.
Appendix~\ref{A} contains some basic facts concerning the geometry of jet bundles and the variational bicomplex.

\section{Variational tricomplex of a local gauge system}\label{section2}

Let $M$ be an $n$-dimensional space-time manifold. In modern language the classical f\/ields are just the sections of a locally trivial f\/iber bundle $\pi\colon E\rightarrow M$. The typical f\/iber $F$ of $E$ is called the \textit{target space of fields}. For trivial bundles, $E=M\times F$, the f\/ields are merely smooth mappings $\phi\colon M \rightarrow F$. For simplicity we restrict ourselves to vector bundles, in which case the space of f\/ields $\Gamma (E)$ has the structure of vector space. At the same time, to accommodate fermionic f\/ields as well as ghost f\/ields associated with gauge symmetries, we assume $\pi\colon E\rightarrow M$ to be a $\mathbb{Z}$-graded supervector bundle. This means that the typical f\/iber $F$ has the structure of a $\mathbb{Z}$-graded superspace, while the base $M$ remains an ordinary (nongraded) manifold~$M$. Following the physical tradition, we refer to the $\mathbb{Z}$-grading as the \textit{ghost number} and denote the degree of a~homogeneous object~$A$ by $\operatorname{gh} (A)$. The Grassmann parity will be denoted by~$\epsilon(A)$. The latter is responsible for the sign rule. It should be emphasized that in the presence of fermions there is no natural correlation between the Grassmann parity and the ghost number. Since throughout the paper we work exclusively in the category of $\mathbb{Z}$-graded supervector bundles, we omit the boring pref\/ixes ``super'' and ``graded'' whenever possible. For a quick introduction to the graded dif\/ferential geometry we refer the reader to~\cite{CatSch,Meh,Roy,Vor1}.

 A fundamental tenet of classical f\/ield theory is locality. Above all it implies that the dynamics of f\/ields are governed by partial dif\/ferential equations. The basic tool for a geometric approach to dif\/ferential equations is provided by the jet bundle formalism. In our case a relevant jet bundle is the bundle $\pi_\infty\colon J^\infty E\rightarrow M$ of inf\/inite jets associated with the vector bundle $\pi\colon E\rightarrow M$. The dif\/ferential forms on $J^\infty E$ carry the structure of double complex. This double complex is called the variational bicomplex because one of its dif\/ferentials coincides with the variational derivative. This leads one to a formal variational calculus that can be viewed as a geometrized version of the classical calculus of variations. The free variational bicomplex represents thus a natural kinematical basis for formulating local f\/ield theories. It is summarized in Appendix~\ref{A}, where we also explain our notation. In the recent paper~\cite{Sh1}, the concept of variational bicomplex was enhanced by introducing two more geometrical ingredients: a~classical BRST dif\/ferential and a~BRST-invariant presymplectic structure. The former brings dynamics into the free variational bicomplex by making it into a tricomplex, while the latter is responsible for quantization and, as we will show below, for establishing a correspondence between symmetries and conservation laws. Let us describe these two extra structures in more detail.

\subsection{Presymplectic structure}\label{PrSt}
By a \textit{presymplectic} $(2,m)$-form on $J^\infty E$ we understand an element $\omega\in {\Lambda}^{2,m}(J^\infty E)$ satisfying
\begin{gather*}
\delta \omega \simeq 0 .
\end{gather*}
The sign $\simeq$ means equality modulo $d$-exact forms. It might be worth to mention that the horizontal degree $m$ of the presymplectic form need not be a priori related to the dimension~$n$ of the space-time manifold~$M$. Two presymplectic forms are considered as equivalent if they dif\/fer by a $d$-exact form. In what follows, we will not distinguish between $\omega\in \Lambda^{2,m}(J^\infty E)$ and its equivalence class $[\omega]$ in the quotient space $\widetilde{\Lambda}^{2,m}(J^\infty E)=\Lambda^{2,m}(J^\infty E)/d\Lambda^{2,m-1}(J^\infty E)$, denoting both by $\omega$. According to the def\/inition above the presymplectic forms are the cocycles of the relative ``$\delta$ modulo $d$'' cohomology in vertical degree 2.

The form $\omega$ is assumed to be homogeneous, so that we can speak of an odd or even presymplectic structure of def\/inite ghost number. The triviality of the relative ``$\delta$ modulo $d$'' cohomology\footnote{Recall that we have restricted ourselves to the f\/ields associated with vector bundles, where the target space of f\/ields is contractible. For more general f\/iber bundles the triviality of the relative $\delta$-cohomology should be taken as hypothesis.} in positive vertical degree (see \cite[Section~19.3.9]{Dickey}) implies that any presymplectic $(2,m)$-form is exact, namely, there exists a homogeneous $(1,m)$-form $\theta$ such that $\omega\simeq\delta\theta$. The form $\theta$ is called a \textit{presymplectic potential} for $\omega$. The presymplectic potential is obviously not unique. If~$\theta_0$ is one of the presymplectic potentials for $\omega$, then setting $\omega_0=\delta \theta_0$ we get
\begin{gather*}
\delta \omega_0=0 ,\qquad \omega_0\simeq \omega .
\end{gather*}
In other words, any presymplectic form has a $\delta$-closed representative.

An evolutionary vector f\/ield $X$ is called \textit{Hamiltonian} with respect to $\omega$ if it preserves the presymplectic form, that is,
\begin{gather}\label{Xom}
L_X\omega\simeq 0 .
\end{gather}
Obviously, the Hamiltonian vector f\/ields form a subalgebra in the Lie algebra of all evolutionary vector f\/ields. We denote this subalgebra by $\mathfrak{X}_\omega(J^\infty E)$. Equation~(\ref{Xom}) is equivalent to
\begin{gather*}
\delta i_X\omega\simeq 0 .
\end{gather*}
Again, because of the triviality of the relative $\delta$-cohomology, we can write
\begin{gather}\label{HVF}
i_X\omega \simeq \delta H
\end{gather}
for some $H\in {\Lambda}^{0,m}(J^\infty E)$. We refer to $H$ as a \textit{Hamiltonian form} (or \textit{Hamiltonian}) associated with $X$. It is clear that equation~(\ref{HVF}) def\/ines the Hamiltonian only modulo adding to $H$ a~$d$-exact form. Therefore, two Hamiltonian forms $H$ and $H'$ will be considered as equivalent if $H\simeq H'$. By abuse of notation, we will use the same symbol $H$ to denote a particular Hamiltonian form and its equivalence class. Sometimes, to indicate the relation between the Hamiltonian vector f\/ields and forms, we will write $X_H$ for $X$. In general, this relationship is far from being one-to-one.

The space ${\Lambda}_\omega^{m}(J^\infty E)$ of all Hamiltonian $m$-forms can be endowed with the structure of a Lie algebra. The corresponding Lie bracket is def\/ined as follows: If $X_A$ and $X_B$ are two Hamiltonian vector f\/ields associated with the Hamiltonian forms $A$ and $B$, then
\begin{gather}\label{PB}
\{A,B\}=(-1)^{\epsilon(X_A)}i_{X_A}i_{X_B}\omega .
\end{gather}
The next proposition shows that the bracket is well def\/ined and possesses all the required properties.

\begin{Proposition}[\cite{Sh1}]\label{1.1}
The bracket \eqref{PB} is bilinear over reals, maps the Hamiltonian forms to Hamiltonian ones, enjoys the symmetry property
\begin{gather*}
\{A,B\}\simeq -(-1)^{(\epsilon(A)+\epsilon(\omega))(\epsilon(B)+\epsilon(\omega))}\{B,A\} ,
\end{gather*}
and obeys the Jacobi identity
\begin{gather*}
\{C,\{A,B\}\}\simeq \{\{C,A\},B\}+(-1)^{(\epsilon(C)+\epsilon(\omega))(\epsilon(A)+\epsilon(\omega))}\{A,\{C,B\}\} .
\end{gather*}
\end{Proposition}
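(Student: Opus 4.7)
The plan is to reduce each claim to a routine computation in Cartan calculus, using as the main technical lever the fact that evolutionary vector fields anticommute with the horizontal differential: $[d,i_X]=0$. This fact means that the $d$-exact ambiguities in the defining relation $i_X\omega\simeq\delta H$ never obstruct the algebra.

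First I would recast the bracket in a form better suited to computation. Since $B$ has vertical degree $0$, we have $i_{X_A}B=0$, so Cartan's magic formula in the vertical direction gives $i_{X_A}\delta B=L_{X_A}B$. Writing $i_{X_B}\omega=\delta B+d\eta$ for some $\eta$ and using $[d,i_{X_A}]=0$, one obtains
\begin{gather*}
i_{X_A}i_{X_B}\omega \simeq i_{X_A}\delta B = L_{X_A}B,
\end{gather*}
hence $\{A,B\}\simeq(-1)^{\epsilon(X_A)}L_{X_A}B$. This identity immediately removes $X_B$ from the expression, so the bracket is independent of the choice of Hamiltonian lift of $B$. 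For independence of $X_A$, I would show that any evolutionary $Y$ with $i_Y\omega\simeq 0$ also satisfies $L_Y B\simeq 0$: writing $i_Y\omega=d\zeta$ and using graded commutation $i_Yi_{X_B}=-(-1)^{\epsilon(Y)\epsilon(X_B)}i_{X_B}i_Y$ together with $[d,i_{X_B}]=0$, one gets $L_Y B=i_Y\delta B\simeq i_Yi_{X_B}\omega=\pm i_{X_B}i_Y\omega=\pm i_{X_B}d\zeta=\mp d\,i_{X_B}\zeta\simeq 0$. Bilinearity is immediate from the definition, and the graded antisymmetry follows from $i_{X_A}i_{X_B}\omega=-(-1)^{\epsilon(X_A)\epsilon(X_B)}i_{X_B}i_{X_A}\omega$ combined with the parity relation $\epsilon(X_H)=\epsilon(H)+\epsilon(\omega)$ forced by $i_{X_H}\omega\simeq\delta H$; this is just careful sign bookkeeping.

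For closure under the bracket (and as preparation for Jacobi), I would prove that the graded commutator $[X_A,X_B]$ is a Hamiltonian vector field whose Hamiltonian form may be chosen to be $\{A,B\}$ up to a sign. Using the Cartan-type identity
\begin{gather*}
i_{[X_A,X_B]}\omega = L_{X_A}i_{X_B}\omega - (-1)^{\epsilon(X_A)\epsilon(X_B)} i_{X_B}L_{X_A}\omega,
\end{gather*}
the Hamiltonian condition $L_{X_A}\omega\simeq 0$, and $L_{X_A}\delta B=\delta L_{X_A}B$, one gets $i_{[X_A,X_B]}\omega\simeq\delta L_{X_A}B\simeq\pm\delta\{A,B\}$. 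Thus $\{A,B\}$ is indeed Hamiltonian, and we may take $X_{\{A,B\}}=\pm[X_A,X_B]$.

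The graded Jacobi identity is then extracted from the corresponding identity for the Lie algebra of evolutionary vector fields. Using the reformulation of Step~1, each of the three terms in the Jacobi sum reduces, up to signs, to an iterated Lie derivative of the form $L_{X_C}L_{X_A}B$, $L_{[X_C,X_A]}B$, etc., and the graded Jacobi identity for the commutator $[\cdot,\cdot]$ on $\mathfrak{X}_\omega(J^\infty E)$ closes the argument. The main obstacle is purely bookkeeping: one must carefully reconcile the Grassmann parity signs, the form-degree signs from permuting $i_X$ operators, and the $(-1)^{\epsilon(X_A)}$ prefactor in the definition of the bracket so that the stated antisymmetry and Jacobi signs emerge. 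All other ingredients—commutation of $d$ with $i_X$, triviality of the relative $\delta$-cohomology, and the Cartan calculus on the bicomplex—have already been invoked in the excerpt and can be taken as given.
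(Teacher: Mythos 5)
The paper itself gives no proof of Proposition~\ref{1.1}; it is imported wholesale from~\cite{Sh1}. Your argument is nevertheless essentially the intended one: the two lemmas you establish --- $\{A,B\}\simeq(-1)^{\epsilon(X_A)}L_{X_A}B$ (using $i_{X_A}B=0$ for horizontal $B$ and the graded commutation of $i_{X_A}$ with $d$) and $i_{[X_A,X_B]}\omega\simeq\pm\,\delta\{A,B\}$ --- are precisely the identities the paper records right after the proposition as (\ref{PB1}) and (\ref{LAHom}), and reducing well-definedness to $L_YB\simeq 0$ for $Y\in\operatorname{ker}\omega$ and Jacobi to $L_{[X,Y]}=[L_X,L_Y]$ is correct. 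The only point to reconcile is that your prefactor $(-1)^{\epsilon(X_A)}=(-1)^{\epsilon(A)+\epsilon(\omega)}$ differs from the paper's $(-1)^{\epsilon(A)}$ in (\ref{PB1}) by $(-1)^{\epsilon(\omega)}$, which falls within the sign bookkeeping you already flagged.
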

Combining equations~(\ref{HVF}) and~(\ref{PB}), one can see that
\begin{gather}\label{PB1}
\{A,B\}\simeq (-1)^{\epsilon(A)}L_{X_A} B .
\end{gather}
The last relation gives an equivalent def\/inition for the Poison bracket.

Let $\operatorname{ker}\omega$ denote the space of all Hamiltonian vector f\/ields $X$ with zero Hamiltonian, i.e.,
\begin{gather*}
i_X\omega \simeq 0 .
\end{gather*}
It is easy to see that $\operatorname{ker} \omega$ is an ideal in the Lie algebra of Hamiltonian vector f\/ields $\mathfrak{X}_\omega(J^\infty E)$. One can regard the quotient $\mathfrak{X}_\omega(J^\infty E)/\operatorname{ker} \omega$ as the Lie algebra of \textit{nontrivial} Hamiltonian vector f\/ields.
The next proposition relates this Lie algebra to the Lie algebra of Hamiltonian forms.

\begin{Proposition}\label{ShES}
There is a short exact sequence
\begin{gather*}
\xymatrix@C=0.5cm{
 0 \ar[r] &\Lambda^m(M) \ar[rr]^{\pi_\infty^\ast} &&\Lambda^{m}_\omega (J^\infty E)\ar[rr]^{\alpha} &&\mathfrak{X}_\omega(J^\infty E)/\operatorname{ker} \omega \ar[r] & 0 } ,
\end{gather*}
where $\pi_\infty^\ast$ is the pull back of the canonical projection $\pi_\infty\colon J^\infty E\rightarrow M$ and the map $\alpha$ assigns to each Hamiltonian form $A$ the equivalence class $X_A+\operatorname{ker} \omega$.
\end{Proposition}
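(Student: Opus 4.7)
Four things must be checked: $\alpha$ is well defined on the equivalence class of a Hamiltonian form $A$, $\alpha$ is surjective, $\ker\alpha=\operatorname{Im}(\pi_\infty^\ast)$, and $\pi_\infty^\ast$ is injective. Only the last two ingredients go beyond a direct unpacking of the definitions; the decisive nontrivial input is an augmented exactness statement for the variational bicomplex.

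If $A\simeq A'$, then $\delta(A-A')=-d\delta\eta\simeq 0$, so the two forms admit a common Hamiltonian vector field, and the freedom in solving (\ref{HVF}) for $X_A$ is precisely $\operatorname{ker}\omega$; hence $\alpha$ descends to equivalence classes. For surjectivity, take any $X\in\mathfrak{X}_\omega(J^\infty E)$ together with a $\delta$-closed representative of $\omega$. The identity $L_X\omega=\delta i_X\omega+i_X\delta\omega$ then reduces (\ref{Xom}) to $\delta i_X\omega\simeq 0$, and the triviality of the relative ``$\delta$ mod $d$'' cohomology in positive vertical degree supplies $H\in\Lambda^{0,m}(J^\infty E)$ with $i_X\omega\simeq\delta H$, so $\alpha(H)=X+\operatorname{ker}\omega$.

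For exactness in the middle, the inclusion $\operatorname{Im}(\pi_\infty^\ast)\subseteq\ker\alpha$ is immediate: $\pi_\infty^\ast a$ is horizontal and hence $\delta$-closed, so the choice $X_{\pi_\infty^\ast a}=0\in\operatorname{ker}\omega$ is admissible. Conversely, $\alpha(A)=0$ forces $i_{X_A}\omega\simeq 0$ and therefore $\delta A\simeq 0$, i.e., the class $[A]$ is a cocycle of the ``$\delta$ mod $d$'' complex at vertical degree zero. The decisive input is the exactness of the augmented complex
\begin{gather*}
0\to\Lambda^m(M)\xrightarrow{\pi_\infty^\ast}\widetilde{\Lambda}^{0,m}(J^\infty E)\xrightarrow{\delta}\widetilde{\Lambda}^{1,m}(J^\infty E)\xrightarrow{\delta}\cdots ,
\end{gather*}
which extends the quoted cohomology vanishing down to vertical degree zero and is a standard form of the algebraic Poincar\'e lemma for the variational bicomplex of a vector bundle \cite{Anderson,Dickey,Olver,Saunders}. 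Exactness at $\widetilde{\Lambda}^{0,m}$ produces $a\in\Lambda^m(M)$ with $A\simeq\pi_\infty^\ast a$, while exactness at $\Lambda^m(M)$ is exactly the injectivity of $\pi_\infty^\ast$.

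The main obstacle is the augmentation itself. While the vanishing of the relative $\delta$-cohomology in positive vertical degree has already been used, extending it down to degree zero requires an explicit fiberwise homotopy built from contractions toward the zero section of $E$ (this is where the vector-bundle hypothesis enters) together with a check that the homotopy is compatible with passing to the quotient by $d$-exact forms. Every other step in the argument is formal.
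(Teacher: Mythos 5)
Your argument follows the route the paper clearly intends: the proposition is stated with ``we leave it to the reader to check exactness,'' and the one nontrivial input you correctly isolate --- the computation of the relative ``$\delta$ modulo $d$'' cohomology in vertical degree zero --- is precisely Proposition~\ref{A1} of the appendix, quoted there from Dickey. So the ``main obstacle'' of your closing paragraph is not an obstacle here: you need not build the fiberwise homotopy yourself, you may simply cite $H^{0,m}(J^\infty E;\delta/d)\simeq \Lambda^m(M)/d\Lambda^{m-1}(M)$ together with $H^{p,m}(J^\infty E;\delta/d)=0$ for $p>0$. Your well-definedness, surjectivity and middle-exactness steps are all sound (one small slip: $\pi_\infty^\ast a$ is $\delta$-closed because it is field-independent, not because it is ``horizontal'' --- a generic $(0,m)$-form is horizontal and has $\delta\neq 0$).

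The one place where your write-up glosses over a real point is exactness at the left end. Hamiltonian forms are identified modulo $d$-exact forms, and $\pi_\infty^\ast(db)=d\,\pi_\infty^\ast b\simeq 0$ for every $b\in\Lambda^{m-1}(M)$; consequently the augmented complex you display, $0\to\Lambda^m(M)\to\widetilde{\Lambda}^{0,m}(J^\infty E)\to\cdots$, is \emph{not} exact at $\Lambda^m(M)$, and the injectivity of $\pi_\infty^\ast$ cannot be deduced from it --- indeed, for $m\geq 1$ it fails as literally stated. Consistently with Proposition~\ref{A1}, the first term should be read as $\Lambda^m(M)/d\Lambda^{m-1}(M)$ (equivalently, $\operatorname{ker}\alpha$ is the space of field-independent forms taken modulo $d$-exact ones, which is what the paper's remark that ``$\operatorname{ker}\alpha$ consists of the field-independent differential forms'' is meant to convey). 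This looseness is inherited from the statement itself, but since you chose to prove injectivity rather than treat it as given, you should either pass to that quotient or work with genuine (unquotiented) forms in $\Lambda^m_\omega(J^\infty E)$, in which case it is instead the middle exactness that acquires the extra $d$-exact piece. Everything else is fine.
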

 We leave it to the reader to check exactness. It is signif\/icant that $\alpha$ is a homomorphism of Lie algebras \cite{Sh1}. This means that
\begin{gather}\label{LAHom}
X_{\{A,B\}}=[X_A,X_B]\quad (\mathrm{mod}\quad \operatorname{ker} \omega)\qquad \forall\, A,B\in \Lambda_\omega^{m}(J^\infty E) ,
\end{gather}
and the ideal $\operatorname{ker} \alpha$ consists of the f\/ield-independent dif\/ferential forms.

It follows from the def\/inition (\ref{HVF}) that each Hamiltonian form is necessarily invariant w.r.t.\ the action of the kernel distribution, that is,
\begin{gather*}
L_XA\simeq 0\qquad \forall\, X\in \operatorname{ker} \omega ,\quad\forall\, A\in \Lambda^{m}_\omega(J^\infty E) .
\end{gather*}
Therefore, the more degenerate the presymplectic structure, the less the size of the space of Hamiltonian forms. A presymplectic form $\omega$ is called nondegenerate if $\operatorname{ker} \omega=0$, in which case we refer to it as a \textit{symplectic form}.

For a general discussion of a presymplectic structure as well as numerous applications of this notion in f\/ield theory we refer the reader to the papers \cite{AG,BHL,CW,Gr,Kh,Kh0,Sh2,Z} and the references therein.

\subsection{Classical BRST dif\/ferential}\label{brst}
An odd evolutionary vector f\/ield $Q$ on $J^\infty E$ is called \textit{homological} if
\begin{gather}\label{QQ}
[Q,Q]=2Q^2=0 , \qquad \operatorname{gh} (Q)=1 .
\end{gather}
We will use the special notation $\delta_Q$ for the Lie derivative along the homological vector f\/ield $Q$. It follows from the def\/inition that $\delta_Q^2=0$. Hence, $\delta_Q$ is a dif\/ferential of the algebra $\Lambda^{\ast,\ast}(J^\infty E)$ increasing the ghost number by 1. Moreover, the operator $\delta_Q$ anticommutes with the coboundary operators $d$ and $\delta$:
\begin{gather*}
\delta_Q d+d\delta_Q=0 ,\qquad \delta_Q \delta+\delta\delta_Q=0 .
\end{gather*}
This allows us to speak of the tricomplex $\Lambda^{\ast,\ast,\ast}(J^\infty E; \delta, d, \delta_Q)$, where
\begin{gather*}
\delta_Q\colon \ \Lambda^{p,q,r}(J^\infty E)\rightarrow \Lambda^{p,q,r+1}(J^\infty E)
\end{gather*}
and $r$ is the ghost number.

In the physical literature the operator $\delta_Q$ is called the \textit{classical BRST differential} and we will also use this term to refer to the homological vector f\/ield $Q$ itself.

The equations of motion of a gauge theory are recovered by considering the zero locus of~$Q$. In terms of the adapted coordinates $(x^i, \phi^a_I)$ on $J^\infty E$ the vector f\/ield $Q$, being evolutionary, assumes the form\footnote{We use the multi-index notation according to which the multi-index $I=i_1i_2\cdots i_k$ represents the set of symmetric covariant indices and $\partial_I=\partial_{i_1}\cdots\partial_{i_k}$. The \textit{order} of the multi-index is given by $|I|=k$.}
\begin{gather*}
Q=\partial_I Q^a\frac{\partial}{\partial \phi_I^a} .
\end{gather*}
Then there exists an integer $l$ such that the equations
\begin{gather*}
\partial_I Q^a=0 ,\qquad |I|=k ,
\end{gather*}
def\/ine a submanifold $\Sigma^k\subset J^{l+k}E$. The standard regularity condition implies that $\Sigma^{k+1}$ f\/ibers over $\Sigma^k$ for each $k$. This gives the inf\/inite sequence of projections
\begin{gather*}
\xymatrix{\cdots\ar[r]& \Sigma^{l+3}\ar[r]&\Sigma^{l+2}\ar[r]&\Sigma^{l+1}\ar[r]&{\Sigma^l}\ar[r]& M ,}
\end{gather*}
which enables us to def\/ine the zero locus of $Q$ as the inverse limit
\begin{gather*}
 \Sigma^\infty =\lim_{\longleftarrow}\Sigma^k .
 \end{gather*}
In physics, the submanifold $\Sigma^\infty\subset J^\infty E$ is usually referred to as the \textit{shell}. The terminology is justif\/ied by the fact that the classical f\/ield equations as well as their dif\/ferential consequences can be written as\footnote{To avoid any confusion, let us stress that the collection of f\/ields $\phi$ includes both the ``usual f\/ields'' (i.e., those with ghost number zero) and the ghost f\/ields. Accordingly, by the classical f\/ield equations we mean partial dif\/ferential equations for the whole collection of f\/ields $\phi$. The equations for ``usual f\/ields'' are then obtained by projecting the shell to the sector of ghost number zero.}
\begin{gather}\label{EoM}
(j^{\infty}\phi)^\ast (\partial_I Q^a)=0 .
\end{gather}
In other words, the f\/ield $\phi\in \Gamma(E)$ satisf\/ies the classical equations of motion if\/f $j^\infty \phi \in \Sigma^\infty$. In the conventional BRST theory of variational gauge systems, the relationship between the zero locus of the classical BRST dif\/ferential and solutions to the classical equations of motion was studied in \cite{GST}. The extension to non-Lagrangian gauge systems may be found in~\cite{KazLS}.

It follows from (\ref{QQ}) that the shell $\Sigma^\infty$ is invariant under the action of $Q$. This makes possible to pull the ``free'' variational tricomplex $\Lambda^{\ast,\ast,\ast}(J^\infty E; \delta,d, \delta_Q)$ back to $\Sigma^\infty$ and so def\/ine the \textit{on-shell tricomplex} $\Lambda^{\ast,\ast,\ast}(\Sigma^\infty; \delta, d, \delta_Q)$. The latter is not generally $d$-exact even locally and this gives rise to various interesting cohomology groups associated with gauge dynamics. For example, the groups $H^{0,\ast,0}(\Sigma^\infty; d)$ describe the so-called \textit{characteristic cohomology} of a gauge system, see \cite{BBH,BG,HKS,T,Tj,Ver,V,V+} and Section~\ref{section3} below. The interpretation of some other groups can be found in~\cite{KLS2}.

It should be noted that the f\/irst variational tricomplex for gauge systems was introduced in \cite{BH} as the Koszul--Tate resolution of the usual variational bicomplex for partial dif\/ferential equations. Using this tricomplex, the authors of \cite{BH} were able to relate various Lie algebras associated with the global symmetries and conservation laws of a classical gauge system. Our tricomplex is similar in nature but involves the full BRST dif\/ferential, and not its Koszul--Tate part.

\subsection[$Q$-invariant presymplectic structure and its descendants]{$\boldsymbol{Q}$-invariant presymplectic structure and its descendants}
By a \textit{gauge system} on $J^\infty E$ we mean a pair $(Q, \omega)$ consisting of a classical BRST dif\/ferential $Q$ and a $Q$-invariant presymplectic $(2,m)$-form $\omega$. In other words, the vector f\/ield $Q$ is supposed to be Hamiltonian with respect to $\omega$, so that $\delta_Q\omega\simeq 0$. Then, according to (\ref{Xom}) and (\ref{HVF}), there exist forms $\omega_1$, $L$, and $\theta_1$ such that
\begin{gather}\label{des}
\delta_Q \omega=d\omega_1 , \qquad i_Q\omega =\delta L +d\theta_1 ,
\end{gather}
with $L$ being the Hamiltonian for $Q$ relative to $\omega$. As was mentioned in Section~\ref{PrSt}, we can always assume that $\omega =\delta\theta$ for some presymplectic potential $\theta$, so that $\delta\omega=0$. Then applying~$\delta$ to the second equality in (\ref{des}) and using the f\/irst one, we f\/ind $d(\omega_1-\delta\theta_1)=0$. On account of the exactness of the variational bicomplex the last relation is equivalent to
\begin{gather*}
\omega_1\simeq \delta\theta_1 .
\end{gather*}
Thus, $\omega_1$ is a presymplectic $(2,m-1)$-form on $J^\infty E$ with the presymplectic potential $\theta_1$. Furthermore, the form~$\omega_1$ is $Q$-invariant as one can easily see by applying $\delta_Q$ to the f\/irst equality in~(\ref{des}) and using once again the fact of exactness of the variational bicomplex. Let $L_1$ denote the Hamiltonian for $Q$ with respect to~$\omega_1$, i.e.,
\begin{gather*}
i_Q\omega_1\simeq \delta L_1 , \qquad L_1\in {\Lambda}^{0,m-1}(J^\infty E) .
\end{gather*}
Given the pair $(Q,\omega)$, we call $\omega_1$ the \textit{descendent presymplectic structure} on $J^\infty E$ and refer to $(Q,\omega_1)$ as the \textit{descendent gauge system}. This construction of a descendent gauge system can be iterated producing a sequence of gauge systems $(Q, \omega_k)$, where the $k$-th presymplectic form $\omega_k\in {\Lambda}^{2,m-k}(J^\infty E)$ is the descendant of the previous form $\omega_{k-1}$. The minimal $k$ for which $\omega_{k} \simeq 0$ gives a numerical invariant of the original gauge system $(Q,\omega)$. We call it the \textit{length of a gauge system}.

\section{Symmetries, observables and conservation laws}\label{section3}

\begin{Definition}\label{DCL} Given a classical BRST dif\/ferential $Q$, a form $\alpha \in \Lambda^{0,m}(J^\infty E)$ is said to def\/ine a conservation law of degree $m$ if
\begin{gather}\label{CL}
d\alpha|_{{\Sigma^\infty}}= 0 .
\end{gather}
The conservation law is called trivial if $\alpha|_{\Sigma^\infty}\simeq 0$.
\end{Definition}
In other words, the conservation laws are represented by the on-shell closed forms and the trivial conservation laws correspond to the on-shell exact forms. This allows us to identify the space of nontrivial conservation laws with the cohomology groups $H^{0,m}(\Sigma^\infty; d)$ of the on-shell variational bicomplex. In addition to the form degree these groups are also graded by the ghost number. In what follows, the form degree of a horizontal form $\alpha\in \Lambda^{0,m}(J^\infty E)$ will be denoted by $\deg \alpha =m$.

Due to the standard regularity condition \cite[Section~5.1]{BBH}, equation~(\ref{CL}) implies the existence of a form $\chi$ such that
\begin{gather*}
d\alpha=i_Q\chi .
\end{gather*}
The form $\chi$ is called the \textit{characteristic} of the conservation law $\alpha$. Note that shifting a characteristic by a $d$-exact form one does not change the equivalence class of the corresponding conservation law. This gives a natural equivalence relation on the space of characteristics. A~characteris\-tic~$\chi$ is called trivial if $\chi\simeq 0$.

Given a conservation law represented by an $m$-form $\alpha$ together with an $m$-cycle $C\subset M$ and a f\/ield conf\/iguration $\phi\in \Gamma(E)$, we can def\/ine the integral
\begin{gather*}
I[\phi]=\int_C (j^{\infty}\phi)^\ast(\alpha) .
\end{gather*}
By construction, the integrand is given by a closed form on $M$ provided that $j^\infty\phi\in \Sigma^\infty$. Therefore, for a f\/ixed solution $\phi$, the value of the integral depends only on the homology class of $C$ in $M$. It is the invariance of the functional $I[\phi]$ under continues deformations of $C$ which is usually meant by a conservation law\footnote{In physical problems the $m$-chain $C$ is often noncompact (e.g., a~time-slice in the Minkowski space), in which case some appropriate asymptotic conditions on the f\/ields are imposed to provide the existence and conservation of the charge $I[\phi]$.}. The functional $I[\phi]$ is called the \textit{conserved charge}.

\begin{Definition}\label{observable}
A form $\alpha\in \Lambda^{0,m}(J^\infty E)$ is called an observable of degree $m$ if
\begin{gather*}\delta_Q\alpha\simeq 0 .\end{gather*}
An observable $\alpha$ is said to be trivial if $\alpha\simeq\delta_Q\beta$ for some $\beta$.
\end{Definition}
According to this def\/inition the space of nontrivial observables of degree $m$ and ghost num\-ber~$r$ is identif\/ied with the cohomology groups $H^{0,m,r}(J^\infty E;\delta_Q)$. (Here we slightly deviate from the standard usage. Usually, by an observable in the BRST theory one means a $Q$-invariant quantity with ghost number zero, which corresponds to a gauge invariant local observable. According to our def\/inition an observable may have nonzero ghost number.)

\begin{Proposition}\label{DesCL} Suppose that the complex
\begin{gather}\label{exact}
\xymatrix{& 0\ar[r]&\mathbb{R}\ar[r] &\Lambda^{0,0}(J^\infty E) \ar[r]^-{d} & \Lambda^{0,1}(J^\infty E) \ar[r]^-{d} & \cdots \ar[r]^-{d} & \Lambda^{0,n}(J^\infty E) \\}
\end{gather}
is exact\/\footnote{This is the case, for example, when $M\simeq\mathbb{R}^n$.}. Then each observable $\alpha_0$ of degree $m $ gives rise to the sequence of observables and conservation laws $\{\alpha_k\}_{k=1}^{m}$, where the characteristic of $\alpha_{k}$ is given by $\delta\alpha_{k-1}$ and $\deg \alpha_k=m-k$. Trivial observables give rise to trivial conservation laws.
\end{Proposition}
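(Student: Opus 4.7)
The plan is to construct the sequence inductively by repeatedly ``integrating'' the relation $\delta_Q \alpha_k = d\alpha_{k+1}$ against the horizontal differential $d$, exploiting the exactness of the complex~\eqref{exact}.

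I would begin by invoking the Cartan-type identity $\delta_Q = \delta i_Q + i_Q \delta$, which holds for every evolutionary vector field $Q$ on $J^\infty E$. On a horizontal form $\alpha\in\Lambda^{0,\ast}(J^\infty E)$ the second summand vanishes since $i_Q$ annihilates horizontal forms, yielding $\delta_Q \alpha = i_Q \delta \alpha$. Applying this to $\alpha_0$ together with the observability condition written as $\delta_Q\alpha_0 = d\alpha_1$ for some $\alpha_1\in\Lambda^{0,m-1}(J^\infty E)$ (the existence of which is guaranteed by exactness of~\eqref{exact}) gives $d\alpha_1 = i_Q(\delta\alpha_0)$, which is precisely the statement that $\alpha_1$ is a conservation law with characteristic $\delta\alpha_0$.

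For the inductive step, suppose $\alpha_k\in\Lambda^{0,m-k}(J^\infty E)$ has already been constructed with $d\alpha_k = \delta_Q\alpha_{k-1}$. The anticommutativity $d\delta_Q + \delta_Q d = 0$ and the homological property $\delta_Q^2 = 0$ yield
\[
d(\delta_Q\alpha_k) = -\delta_Q(d\alpha_k) = -\delta_Q^2 \alpha_{k-1} = 0,
\]
so $\delta_Q\alpha_k$ is $d$-closed. For $k\le m-1$, exactness of~\eqref{exact} then produces $\alpha_{k+1}\in\Lambda^{0,m-k-1}(J^\infty E)$ satisfying $d\alpha_{k+1} = \delta_Q\alpha_k$; applying the Cartan identity once more to $\alpha_k$ yields $d\alpha_{k+1} = i_Q(\delta\alpha_k)$, so $\alpha_{k+1}$ is indeed a conservation law with characteristic $\delta\alpha_k$. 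Iteration through $k=1,\dots,m-1$ produces the full sequence $\{\alpha_k\}_{k=1}^m$.

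For the triviality claim, I would write a trivial observable as $\alpha_0 = \delta_Q\beta_0 + d\gamma_0$ and compute $\delta_Q\alpha_0 = \delta_Q^2\beta_0 + \delta_Q d\gamma_0 = -d(\delta_Q\gamma_0)$, so one may take $\alpha_1 = -\delta_Q\gamma_0 = -i_Q\delta\gamma_0$. Because $Q$ vanishes on the shell $\Sigma^\infty$, so does $i_Q$, whence $\alpha_1|_{\Sigma^\infty}=0$, showing $\alpha_1$ is a trivial conservation law. The same pattern propagates: at every stage $\alpha_{k+1}$ may be chosen of the form $-\delta_Q(\,\cdot\,)$ and hence vanishes on $\Sigma^\infty$. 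I do not anticipate any serious obstacle beyond bookkeeping of equivalence classes modulo $d$-exact forms; the essential structural input is the assumed exactness of~\eqref{exact}, without which the $d$-antiderivatives $\alpha_{k+1}$ would not be guaranteed to exist.
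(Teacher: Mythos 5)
Your proposal is correct and follows essentially the same cohomological descent argument as the paper: apply $\delta_Q$ to $\delta_Q\alpha_{k-1}=d\alpha_k$, conclude $d\delta_Q\alpha_k=0$, and use exactness of the horizontal complex to produce $\alpha_{k+1}$, with the triviality claim handled by the same computation $\delta_Q(d\gamma)=-d(\delta_Q\gamma)$. Your explicit use of $\delta_Q\alpha=i_Q\delta\alpha$ on horizontal forms merely spells out why the characteristic is $\delta\alpha_{k-1}$, a step the paper leaves implicit (note only the minor slip that it is the summand $\delta\, i_Q\alpha$, not $i_Q\delta\alpha$, that vanishes on horizontal forms).
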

Note that the proposition does not assert that all conservation laws originating from a nontrivial observable are nontrivial.

\begin{proof} We use the cohomological descent method \cite{BBH}. From Def\/inition \ref{observable} of an observable it follows that
\begin{gather}\label{da1}
\delta_Q\alpha_0=d\alpha_1
\end{gather}
for some $\alpha_1$ of degree $m-1$. By Def\/inition \ref{DCL}, $\alpha_1$ is a conservation law with characteristic $\delta\alpha_0$. Applying the dif\/ferential $\delta_Q$ to both sides of (\ref{da1}) yields $d\delta_Q\alpha_1=0$. The complex (\ref{exact}) being exact, we can write $\delta_Q\alpha_1=d\alpha_2$ for some $\alpha_2\in \Lambda^{0,m-2}(J^\infty E)$. Thus, $\alpha_1$ is an observable and $\alpha_2$ is a conservation law with characteristic $\delta\alpha_1$. Iterating this construction once and again, we get the sequence $\{\alpha_k\}$ of observables and conservation laws.

If $\alpha_0$ is a trivial observable, then $\alpha_0=\delta_Q\beta +d\gamma$ and $\alpha_1=\delta_Q\gamma+d\sigma$ for some $\sigma$. Hence, $\alpha_1$~is trivial as an observable and as a conservation law.
\end{proof}

\begin{Definition}
An evolutionary vector f\/ield $X$ is called a symmetry of a gauge system if it preserves the classical BRST dif\/ferential, that is,
\begin{gather*}
[X,Q]=0 .
\end{gather*}
\end{Definition}

It follows from the def\/inition that the f\/low generated by $X$ preserves the shell $\Sigma^\infty$ mapping solutions to solutions.

\begin{Definition}
A symmetry $X$ is called trivial or a gauge symmetry, if there exists another evolutionary vector f\/ield $Y$ such that $X=[Q,Y]$.
\end{Definition}
It is easy to see that the gauge symmetries form an ideal $\mathfrak{X}_{\rm GS}(J^\infty E)$ in the Lie algebra of all symmetries $\mathfrak{X}_S(J^\infty E)$. Therefore, it is natural to identify the Lie algebra of nontrivial symmetries with the quotient $\mathfrak{X}_S(J^\infty E)/\mathfrak{X}_{\rm GS}(J^\infty E)$. The latter can also be regarded as the group of $\delta_Q$-cohomology, with the dif\/ferential $\delta_Q$ -- the Lie derivative along~$Q$ -- acting in the space of evolutionary vector f\/ields.

In this paper, we are mostly interested in the Hamiltonian symmetries of gauge systems.

\begin{Definition}\label{HSym}
A symmetry $X$ is called Hamiltonian if $X$ is a Hamiltonian vector f\/ield.
\end{Definition}

\begin{Proposition}\label{27}
The Hamiltonian of a Hamiltonian symmetry is an observable. Trivial Hamiltonian symmetries corresponds to trivial observables.
\end{Proposition}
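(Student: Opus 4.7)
The plan is to apply the BRST differential $\delta_Q$ directly to the defining relation $i_X\omega = \delta H + d\mu$ of a Hamiltonian symmetry and read off both claims. Three compatibility identities from the preceding subsections do the work: the anticommutation $\delta_Q\delta + \delta\delta_Q = 0$, the graded Cartan-type formula $[\delta_Q,i_X] = i_{[Q,X]}$ for evolutionary $X$, and the $Q$-invariance $\delta_Q\omega = d\omega_1$ built into the gauge-system data.

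For the first assertion, I would compute both sides of $\delta_Q(i_X\omega) = \delta_Q(\delta H + d\mu)$. The symmetry hypothesis $[Q,X]=0$ reduces the Cartan identity to $\delta_Q i_X\omega = \pm\, i_X\delta_Q\omega = \pm\, i_X d\omega_1$; since $X$ is evolutionary, $i_X$ (anti)commutes with the horizontal~$d$, so this is $d$-exact, hence $\simeq 0$. On the right, $\{\delta_Q,\delta\}=0$ yields $-\delta(\delta_Q H) - d(\delta_Q\mu)$. Equating modulo $d$-exact forms gives the key intermediate identity $\delta(\delta_Q H)\simeq 0$. To upgrade this to $\delta_Q H\simeq 0$, I would reinterpret $\delta_Q H$ via~(\ref{PB1}) as $\delta_Q H \simeq \pm\{L,H\}$, use (\ref{LAHom}) to recognise that the Hamiltonian vector field of $\{L,H\}$ equals $[Q,X] = 0$ modulo $\ker\omega$, and invoke Proposition~\ref{ShES} to place $\{L,H\}$ in the image of $\pi_\infty^\ast$, i.e.\ as a pullback of a form from~$M$; under the exactness hypothesis on the complex (\ref{exact}) that pullback is $d$-exact in positive form-degree, whence $\delta_Q H \simeq 0$ and $H$ is an observable.

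For the second assertion, I would take a trivial Hamiltonian symmetry $X = [Q,Y]$ with $Y$ itself Hamiltonian (say with Hamiltonian $K$, which is the natural BRST setting). The homomorphism (\ref{LAHom}) then identifies the Hamiltonian vector field of $X$ with that of $\{L,K\}$ modulo $\ker\omega$, so that $H \simeq \{L,K\} \simeq \pm\delta_Q K$ by (\ref{PB1}), again modulo a pullback from $M$ that drops out under the same cohomological assumption. This exhibits $H$ as a trivial observable.

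The chief obstacle I anticipate is precisely the pullback ambiguity $\pi_\infty^\ast\beta$ that surfaces in both parts: such a pullback is automatically $\delta_Q$-closed and hence compatible with the observable property, but the strict equivalences $\delta_Q H\simeq 0$ and $H\simeq\delta_Q K$ require the de Rham cohomology of $M$ to vanish in the relevant positive degree. This is the graded analogue of the familiar additive-constant ambiguity in passing between Hamiltonian vector fields and Hamiltonian functions, and it is the one subtle place where the proof leans on the same horizontal-exactness hypothesis used in Proposition~\ref{DesCL}.
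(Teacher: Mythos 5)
Your opening move is exactly the paper's: applying $\delta_Q$ to $i_X\omega=\delta H+d\mu$ and using $[Q,X]=0$, $\delta_Q\omega=d\omega_1$ and the evolutionary character of $X$ to arrive at $\delta\,\delta_Q H\simeq 0$ is how the proof of Proposition~\ref{27} begins. The genuine gap is in how you dispose of the residual term. Whether you invoke Proposition~\ref{A1} directly or take your detour through $\{L,H\}$ and the exact sequence of Proposition~\ref{ShES}, you land at $\delta_Q H=\pi_\infty^\ast\beta+d\gamma$ with $\beta\in\Lambda^m(M)$, and you then propose to kill $\pi_\infty^\ast\beta$ by the exactness of the complex~(\ref{exact}). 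This does not work: exactness turns \emph{closed} forms into exact ones, and nothing in your argument shows $d\beta=0$ (for $m<n$ the form $\beta$ supplied by Proposition~\ref{A1} is a priori arbitrary, and a non-closed $\pi_\infty^\ast\beta$ can never be $d$-exact, since $d\,\pi_\infty^\ast\beta=\pi_\infty^\ast d\beta$). Moreover, Proposition~\ref{27} is stated and proved in the paper \emph{without} any hypothesis on the complex~(\ref{exact}) --- that hypothesis enters only in Proposition~\ref{DesCL} and Corollary~\ref{C28} --- so importing it here needlessly weakens the statement.

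The paper's mechanism for eliminating $\beta$ is different and topology-free. First, $\beta$ is a form on the ordinary manifold $M$ and therefore has ghost number zero; since $\operatorname{gh}(\beta)=\operatorname{gh}(H)+1$, the case $\operatorname{gh}(H)\neq-1$ forces $\beta=0$ outright. In the remaining case one notes that $\delta_Q H=i_Q\delta H$ pulls back to zero along $j^\infty\phi$ for any solution $\phi$ of~(\ref{EoM}); applying $(j^\infty\phi)^\ast$ to $\delta_Q H=\pi_\infty^\ast\beta+d\gamma$ then gives $\beta=-d\,(j^\infty\phi)^\ast\gamma$, so $\pi_\infty^\ast\beta$ is $d$-exact after all and $\delta_Q H\simeq 0$. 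You should replace your cohomological appeal by this on-shell evaluation (or the ghost-number count). For the second assertion your argument agrees with the paper's, including the tacit restriction to Hamiltonian $Y$; but the same unjustified ``drops out under the cohomological assumption'' reappears there and needs the same repair.
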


\begin{proof}
By def\/inition we have
\begin{gather*}
i_X\omega\simeq \delta \alpha ,
\end{gather*}
where $\alpha$ is a Hamiltonian of $X$. Acting by $\delta_Q$ on both
the sides of the last expression, we get
\begin{gather*}
\delta\delta_Q\alpha\simeq 0 .
\end{gather*}
By Proposition \ref{A1},
\begin{gather}\label{AAA}
\delta_Q\alpha=\pi^\ast_{\infty}(\beta) +d\gamma ,
\end{gather}
where $\beta$ is a dif\/ferential form on $M$. If $\operatorname{gh} (\beta)=\operatorname{gh}(\alpha)+1\neq 0$, then automatically $\beta=0$ as we have no parameters with nonzero ghost number. In the general case, consider a solution $\phi\in \Gamma(E)$ to the equations of motion (\ref{EoM}). We have $(j^\infty \phi)^\ast(\delta_Q\alpha)=(j^\infty \phi)^\ast(i_Q\delta \phi)=0$. Applying now the pullback $(j^\infty\phi)^\ast$ to both the sides of
(\ref{AAA}), we f\/ind $\beta = - d(j^\infty \phi)^\ast(\gamma)$. Hence, $\delta_Q\alpha\simeq 0$ and the form $\alpha$ is an observable.

If $X=\delta_Q Y$, then, according to (\ref{PB1}) and (\ref{LAHom}), the Hamiltonian of $X$ is given by the form $\alpha=-\delta_Q\beta+d\gamma$, where $\beta$ is the Hamiltonian of $Y$ and $\gamma$ is an arbitrary $(m-1)$-form. Thus, $\alpha$~is a trivial observable.
\end{proof}

Combining the last proposition with Proposition \ref{DesCL}, we arrive at the following statement.

\begin{Corollary}\label{C28} If the sequence \eqref{exact} is exact, then each Hamiltonian symmetry gives rise to a~sequence of conservation laws, perhaps trivial.
\end{Corollary}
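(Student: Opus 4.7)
The plan is to chain together the two preceding results, exactly as the wording ``Combining the last proposition with Proposition \ref{DesCL}'' suggests. Start with a Hamiltonian symmetry $X$ and let $H\in\Lambda^{0,m}(J^\infty E)$ be an associated Hamiltonian form, so that $i_X\omega\simeq \delta H$. By Proposition~\ref{27} the form $H$ is automatically an observable, $\delta_Q H\simeq 0$, which places it in the hypothesis of Proposition~\ref{DesCL}.

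Set $\alpha_0 := H$ and feed it into the cohomological descent of Proposition~\ref{DesCL}. The assumed exactness of the horizontal complex \eqref{exact} lets one solve, inductively in $k$,
\begin{gather*}
\delta_Q\alpha_{k-1}=d\alpha_k,\qquad k=1,\ldots,m,
\end{gather*}
for forms $\alpha_k$ of degree $m-k$. Proposition~\ref{DesCL} then identifies each $\alpha_k$ ($k\geq 1$) as a conservation law whose characteristic is $\delta\alpha_{k-1}$. This is precisely the claimed descending sequence of conservation laws associated to the Hamiltonian symmetry $X$.

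The ``perhaps trivial'' qualifier is handled by the second halves of the two propositions. If $X=[Q,Y]$ is a gauge (trivial) Hamiltonian symmetry, then by Proposition~\ref{27} its Hamiltonian $H$ is a trivial observable, and the last sentence of Proposition~\ref{DesCL} then guarantees that every $\alpha_k$ in the tower is a trivial conservation law. For a generally nontrivial Hamiltonian symmetry, individual $\alpha_k$ may still happen to vanish on shell, and the qualifier absorbs exactly this possibility, so nothing further need be shown.

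There is no real obstacle: the corollary is a clean splicing of Propositions~\ref{27} and~\ref{DesCL}. The only minor point worth recording in passing is that $H$ is defined by $X$ only up to a $d$-exact form, but the descent equations are themselves valid modulo $d$-exactness, so this ambiguity merely shifts each $\alpha_k$ by an exact piece and leaves its class in $H^{0,m-k}(\Sigma^\infty;d)$ untouched. The exactness hypothesis on \eqref{exact} enters solely to perform each step of the descent, just as in the proof of Proposition~\ref{DesCL}, and it is the sole nontrivial ingredient being invoked.
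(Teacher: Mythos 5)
Your proposal is correct and is exactly the paper's argument: the corollary is stated there as an immediate consequence of combining Proposition~\ref{27} (the Hamiltonian of a Hamiltonian symmetry is an observable, trivial symmetries yielding trivial observables) with the descent construction of Proposition~\ref{DesCL}. Your additional remarks on the $d$-exact ambiguity of $H$ and on the meaning of ``perhaps trivial'' match the paper's own caveat following Proposition~\ref{DesCL}.
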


For example, the classical BRST dif\/ferential $Q$ can be viewed as a symmetry for itself. So, it gives rise to a conservation law $L_1$ def\/ined by the equation $\delta_Q L=dL_1$, where $L$ is the Hamiltonian of $Q$ relative to $\omega$. It is not hard to see \cite{Sh1} that the form $L'_1=L_1+i_Q\theta_1$, def\/ining an equivalent conservation law, is Hamiltonian relative to the descendent presymplectic structure $\omega_1=\delta\theta_1$. Indeed, applying $\delta_Q$ to both the sides of the second equality in (\ref{des}), we get
\begin{gather*}
i_Q\delta_Q\omega=-\delta\delta_Q L -d\delta_Q\theta_1 ,
\\
i_Qd\omega_1=-\delta dL_1-d(i_Q\delta -\delta i_Q)\theta_1 ,
\\
di_Q\omega_1=d\delta (L_1+i_Q\theta_1)-di_Q\omega_1 ,
\end{gather*}
and hence
\begin{gather*}
2i_Q\omega_1\simeq \delta (L_1+i_Q\theta_1)=\delta L'_1 .
\end{gather*}
 It then follows from equation~(\ref{LAHom}) that $\delta \{L'_1, L'_1\}\simeq 8i_{Q^2}\omega_1=0$. If $\operatorname{gh}\{L'_1,L'_1\}\neq 0$, then
\begin{gather}\label{cmeq}
{\{L'_1,L'_1\}}_{1}\simeq 0 .
\end{gather}
In the case where $L$ is a form of top horizontal degree, the integral
\begin{gather*}
\Omega[\phi]=\int_N(j^{\infty}\phi)^\ast(L'_1)
\end{gather*}
over a Cauchy hypersurface $N\subset M$ is called the \textit{classical BRST charge} and equation~(\ref{cmeq}) is known as the \textit{classical master equation}, see \cite{Sh1}.

In a sense the example of the BRST symmetry $Q$ is the exception rather than the rule. Ge\-ne\-rally the conservation laws associated with Hamiltonian symmetries are neither Hamiltonian nor equivalent to Hamiltonian (relative to the descendent presymplectic structure). In the next section, we will see that the descendent presymplectic forms do induce appropriate Lie brackets on the conservation laws of various degrees providing one properly extends the notion of a~Hamiltonian form.

\section{The Lie algebra of conservation laws}\label{section4}
As was mentioned in Section~\ref{brst} the variational tricomplex admits a consistent restriction to the shell $\Sigma^\infty$. The cochains of the on-shell tricomplex can be identif\/ied with the equivalence classes of dif\/ferential forms on $J^\infty E$, where two forms $\alpha$ and $\beta$ are considered equivalent if
\begin{gather*}
\alpha|_{\Sigma^\infty}=\beta|_{\Sigma^\infty} .
\end{gather*}
For the further convenience we also introduce the sign of ``weak equality'' $\approx$ meaning that
\begin{gather*}
\alpha\approx \beta \quad \Leftrightarrow\quad \alpha|_{\Sigma^\infty}\simeq \beta|_{\Sigma^\infty} .
\end{gather*}
Due to the regularity condition for $\Sigma^\infty$, the equation $\alpha \approx 0$ simply means that there exists a~$d$-exact form $d\sigma$ such that the dif\/ference $\alpha-d\sigma$ belongs to the dif\/ferential ideal of $\Lambda^{\ast,\ast} (J^\infty E)$ algebraically generated by all the dif\/ferential forms of the form $i_Q\beta$ and $\delta_Q \gamma$.

\begin{Definition}\label{OHVF}
A symmetry $X$ of a gauge system $(Q,\omega)$ is called on-shell Hamiltonian
if there exists a form $\alpha$ such that
\begin{gather}\label{onshham}
i_X\omega\approx \delta \alpha .
\end{gather}
\end{Definition}
 As a consequence of the def\/inition, $L_X\omega \approx 0$ for any on-shell Hamiltonian symmetry $X$. The converse is not always true as the on-shell bicomplex may not be globally exact in columns even if the underlying f\/iber bundle of f\/ields $\pi\colon E\rightarrow M$ is a vector bundle. It is obvious that the Hamiltonian symmetries form a subalgebra in the Lie algebra of all on-shell Hamiltonian symmetries. We denote the latter by ${\mathfrak{X}}_{\omega,Q}(J^\infty E)$.

 Equation~(\ref{onshham}) def\/ines $\alpha$ only modulo $d$-exact and on-shell vanishing forms. A form $\alpha$ satisfying (\ref{onshham}) for some symmetry $X$ will be called \textit{on-shell Hamiltonian}. Two on-shell Hamiltonian forms $\alpha$ and $\alpha'$ associated with one and the same symmetry $X$ will be considered as equivalent if $\alpha'\approx \alpha$. Due to the regularity of the shell the last equality is equivalent to the existence of forms $\beta$ and $\gamma$ such that $\alpha'-\alpha=i_Q\beta+d\gamma$.

\begin{Proposition}
The equivalence classes of on-shell Hamiltonian forms make a Lie algebra with respect to the bracket
\begin{gather}\label{WHF}
\{\alpha, \beta\}=(-1)^{\epsilon(X)}i_Xi_Y\omega ,
\end{gather}
where $X$ and $Y$ are symmetries associated with $\alpha$ and $\beta$, respectively.
\end{Proposition}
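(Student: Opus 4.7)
The plan is to follow the pattern of Proposition~\ref{1.1} but carry out every step in the on-shell bicomplex, systematically replacing $\simeq$ by $\approx$. The basic calculational tool is the graded Cartan identity
\[
i_{[X,Y]} = L_X i_Y - (-1)^{\epsilon(X)\epsilon(Y)} i_Y L_X,
\]
combined with the fact that symmetries form a Lie subalgebra of $\mathfrak{X}(J^\infty E)$, so $[X,Y]$ is again a symmetry whenever $X$ and $Y$ are. One also uses that $L_X\omega\approx 0$ for every on-shell Hamiltonian symmetry $X$, which follows by applying $\delta$ to (\ref{onshham}) and invoking $\delta\omega\simeq 0$ for a convenient representative of the presymplectic class.

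First I would establish well-definedness of the bracket. Three ambiguities must be controlled: the choice of symmetry $X$ realising a given on-shell Hamiltonian $\alpha$, the choice of representative within the class $\alpha\approx\alpha'$, and the analogous ambiguities for $Y,\beta$. If $X$ and $X'$ both satisfy (\ref{onshham}) for the same $\alpha$, then $i_{X-X'}\omega\approx 0$, so $X-X'$ belongs to the on-shell kernel of $\omega$. Graded skew-commutativity of the double interior product then yields $i_{X-X'}i_Y\omega\approx\pm i_Y i_{X-X'}\omega\approx 0$, where the last step uses that $Y$ is a symmetry and therefore $i_Y$ preserves the differential ideal defining $\approx 0$. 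A shift $\alpha\mapsto\alpha+i_Q\sigma+d\gamma$ changes $\delta\alpha$ by an $\approx 0$ term, so the same $X$ continues to satisfy (\ref{onshham}) for the new representative and the bracket is unaltered.

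Next I would verify that $\{\alpha,\beta\}$ is itself on-shell Hamiltonian, with associated symmetry $[X,Y]$ modulo the on-shell kernel of $\omega$. The key computation is
\[
i_{[X,Y]}\omega \approx L_X i_Y\omega \approx L_X\delta\beta = \delta L_X\beta,
\]
in which the first $\approx$ uses $L_X\omega\approx 0$ in the Cartan identity applied to $\omega$, the second uses (\ref{onshham}) for $Y$, and the equality uses the commutativity of $L_X$ with $\delta$. Since $\beta$ is horizontal and $X$ is evolutionary, $i_X\beta=0$, so $L_X\beta = i_X\delta\beta\approx i_X i_Y\omega$; multiplying by the sign $(-1)^{\epsilon(X)}$ identifies $\delta\{\alpha,\beta\}$ with $i_{[X,Y]}\omega$ on-shell, exactly in the spirit of~(\ref{PB1}).

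Antisymmetry is immediate from the graded skew-commutativity of $i_X i_Y$ acting on $\omega$. The Jacobi identity then reduces, via the previous step, to the Jacobi identity on the Lie algebra of evolutionary vector fields: the assignment $\alpha\mapsto X$ modulo the on-shell kernel of $\omega$ is a Lie-algebra homomorphism, and one repeats verbatim the bookkeeping used in Proposition~\ref{1.1}. In my view the main technical obstacle is the uniform handling of the $\approx$-errors: because the on-shell bicomplex need not be globally exact in columns, every inversion of $\delta$ or $d$ must be justified through the regularity characterisation of $\approx 0$ as membership in the differential ideal generated by $i_Q(\cdot)$ and $\delta_Q(\cdot)$ modulo a $d$-exact term, rather than by a naive appeal to a Poincar\'e lemma.
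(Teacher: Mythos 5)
Your proposal is correct and follows exactly the route the paper intends: the paper's own ``proof'' is the single remark that one literally repeats the argument of Proposition~\ref{1.1} with $\simeq$ replaced by $\approx$, which is precisely your plan. Your expansion of the details (well-definedness via $\operatorname{ker}_Q\omega$, closure via the Cartan identity and $[X,Q]=0$ ensuring $L_X$ preserves the differential ideal defining $\approx$, and Jacobi via the homomorphism $\alpha\mapsto X_\alpha$) is sound and consistent with the paper's Proposition~\ref{ShES2}.
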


The proof of this proposition literally repeats that of Proposition \ref{1.1} if one replaces the equality $\simeq$ by the weaker one $\approx$. The Lie algebra of all on-shell Hamiltonian $m$-forms will be denoted by $\Lambda^m_{\omega,Q}(J^\infty E)$. For Lagrangian theories without gauge symmetries the Lie bracket~(\ref{WHF}) of the f\/irst integrals of motion was studied in \cite{Dickey}.

Let $\operatorname{ker}_Q\omega$ denote the space of all symmetries satisfying the homogeneous equation \begin{gather*}i_X\omega\approx 0 .\end{gather*}
It is clear that $\operatorname{ker}_Q\omega $ contains the intersection $\mathfrak{X}_S(J^\infty E)\cap \operatorname{ker} \omega$ and is contained in $\mathfrak{X}_{\omega,Q}(J^\infty E)$. If now $X\in \operatorname{ker}_Q\omega$ and $Y\in \mathfrak{X}_{\omega,Q}(J^\infty E)$, then
\begin{gather*}
L_Yi_X\omega\approx i_{[Y,X]}\omega\approx 0.
\end{gather*}
This shows that $\operatorname{ker}_Q\omega$ is an ideal of the Lie algebra $\mathfrak{X}_{\omega,Q}(J^\infty E)$ and we can def\/ine the quotient algebra $\mathfrak{X}_{\omega,Q}(J^\infty E)/\operatorname{ker}_Q\omega$.

\begin{Proposition}\label{ShES2}
The correspondence $ \alpha \mapsto X_\alpha+\operatorname{ker}_Q\omega$ defines a homomorphism \begin{gather*}f\colon \ \Lambda^m_{\omega,Q}(J^\infty E)\rightarrow\mathfrak{X}_{\omega,Q}(J^\infty E)/\operatorname{ker}_Q\omega\end{gather*} of the Lie algebras.
 \end{Proposition}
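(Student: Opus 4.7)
The plan is to mirror the proof of Proposition \ref{1.1} but with $\simeq$ replaced throughout by the weak equality $\approx$, exploiting the remark preceding the statement that ``the proof literally repeats that of Proposition \ref{1.1}''. Concretely, I would proceed in two stages: first verify that $f$ is well defined on equivalence classes of on-shell Hamiltonian forms, and then establish the bracket-preserving property $X_{\{\alpha,\beta\}}\equiv [X_\alpha,X_\beta]\pmod{\operatorname{ker}_Q\omega}$.

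For well-definedness there are two independent checks. If $X$ and $X'$ are two symmetries with $i_X\omega\approx\delta\alpha\approx i_{X'}\omega$, then $X-X'$ is again a symmetry (since $[\,\cdot\,,Q]$ is linear) and satisfies $i_{X-X'}\omega\approx 0$, so $X-X'\in\operatorname{ker}_Q\omega$. If instead $\alpha'\approx\alpha$, say $\alpha'-\alpha=i_Q\beta+d\gamma$, one uses the graded Cartan formula for the odd evolutionary field $Q$, $\delta i_Q+i_Q\delta=\delta_Q$, to rewrite $\delta(\alpha'-\alpha)$ as a sum of terms in the ideal generated by $i_Q(\cdot)$, $\delta_Q(\cdot)$ and $d(\cdot)$; hence $\delta\alpha'\approx\delta\alpha$ and the associated symmetries again differ by an element of $\operatorname{ker}_Q\omega$.

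For the homomorphism property, let $X=X_\alpha$ and $Y=X_\beta$. The graded Jacobi identity ensures $[X,Y]$ is still a symmetry. The identities $[L_X,\delta]=0$ (for evolutionary $X$) and $[L_X,i_Y]=i_{[X,Y]}$ hold strictly on $\Lambda^{*,*}(J^\infty E)$ and are therefore compatible with $\approx$. Combining them with the consequence $L_X\omega\approx 0$ of $i_X\omega\approx\delta\alpha$ gives
\[
i_{[X,Y]}\omega \;\approx\; L_X i_Y\omega \;\approx\; L_X\delta\beta \;=\; \pm\,\delta L_X\beta .
\]
On the other hand, the weak analogue of (\ref{PB1}), whose derivation is verbatim the one given for Proposition \ref{1.1}, yields $\{\alpha,\beta\}\approx (-1)^{\epsilon(\alpha)}L_X\beta$. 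Comparing the two identities shows $i_{[X,Y]}\omega\approx\pm\,\delta\{\alpha,\beta\}$, i.e., $[X,Y]$ is on-shell Hamiltonian with form $\pm\{\alpha,\beta\}$, and therefore $[X_\alpha,X_\beta]$ and $X_{\{\alpha,\beta\}}$ coincide modulo $\operatorname{ker}_Q\omega$, which is precisely the claim.

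The main obstacle is purely bookkeeping: tracking the Grassmann signs coming from the parities and ghost numbers of $\alpha$, $\beta$, $\omega$, and $Q$, and ensuring that every intermediate form lies in the ideal defining $\approx$ so that one never has to exit the weak-equality framework. No genuinely new input beyond Proposition \ref{1.1} is required, precisely because the strict identities $[L_X,\delta]=0$ and $[L_X,i_Y]=i_{[X,Y]}$ transport through the on-shell formalism without modification.
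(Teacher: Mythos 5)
Your proposal is correct and follows exactly the route the paper intends: the paper itself dismisses the proof as ``straightforward,'' and your argument is the expected transcription of the strict-case identity \eqref{LAHom} (and of the proof of Proposition~\ref{1.1}) with $\simeq$ replaced by $\approx$, together with the two well-definedness checks. The only point worth making explicit beyond ``sign bookkeeping'' is that $L_X$ and $i_Y$ preserve the ideal generated by the forms $i_Q\beta$ and $\delta_Q\gamma$ precisely because $X$ and $Y$ commute with $Q$ --- this is where the hypothesis that the vector fields are symmetries (and not merely on-shell Hamiltonian) enters, and it is the same mechanism the paper uses just above the proposition to show that $\operatorname{ker}_Q\omega$ is an ideal.
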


The proof is straightforward. Notice that $\operatorname{ker} f$ contains the f\/ield-independent forms, i.e., the elements of $\operatorname{im} \pi_\infty^\ast$. Belonging to the center of the Lie algebra $\Lambda^m_{\omega,Q}(J^\infty E)$, these forms are responsible for appearance of possible central charges in the Lie algebra of on-shell Hamiltonian symmetries or, more properly, in its preimage in $\Lambda^m_{\omega,Q}(J^\infty E)$.

\begin{Theorem}
Let $\{\alpha_k\}$ be the sequence of conservation laws associated with a Hamiltonian symmetry $X$. Then the form $\alpha_k$ is on-shell Hamiltonian w.r.t.\ the $k$-th descendent presymplectic structure.
\end{Theorem}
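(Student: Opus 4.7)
The plan is to prove, by induction on $k$, the sharpened statement that there exist forms $\rho_k, \sigma_k$ such that
\begin{gather*}
i_X\omega_k = \delta\alpha_k + \delta_Q\rho_k + d\sigma_k,
\end{gather*}
with $\rho_0 = 0$. Since $\delta_Q\rho_k$ lies in the differential ideal that defines $\approx$, this form of the identity immediately implies $i_X\omega_k \approx \delta\alpha_k$, i.e., on-shell Hamiltonian-ness of $\alpha_k$ with respect to $\omega_k$. The base case $k=0$ is precisely Proposition~\ref{27}: $X$ Hamiltonian means $i_X\omega \simeq \delta\alpha_0$, which can be rewritten as the required identity with $\rho_0 = 0$.

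For the inductive step, I would apply $\delta_Q$ to both sides of the inductive hypothesis. On the left, since $[X,Q]=0$ implies that $i_X$ and $\delta_Q$ anticommute, and $\delta_Q\omega_k = d\omega_{k+1}$ by the descent of the presymplectic structure, one obtains $\pm d\, i_X\omega_{k+1}$ (using also that $i_X$ is vertical, hence anticommutes with $d$). On the right, $\delta_Q\delta = -\delta\delta_Q$ together with the observable/conservation-law descent identity $\delta_Q\alpha_k = d\alpha_{k+1}$ (Proposition~\ref{DesCL}) produces a term $\pm d\delta\alpha_{k+1}$; the middle term $\delta_Q(\delta_Q\rho_k)$ vanishes outright by $\delta_Q^2 = 0$; and the last term gives $\mp d\delta_Q\sigma_k$. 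Collecting these, one arrives at an identity of the form
\begin{gather*}
d\bigl(i_X\omega_{k+1} - \delta\alpha_{k+1} + \delta_Q\sigma_k\bigr) = 0
\end{gather*}
(up to signs I am suppressing; they can be normalised by the conventional freedom in choosing representatives of $\omega_{k+1}$ and $\alpha_{k+1}$).

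I would then invoke acyclicity of the horizontal differential $d$ in positive vertical degree for horizontal degrees strictly below $n$: the bracketed form has bidegree $(1, m{-}k{-}1)$, hence lies in the range where the augmented variational bicomplex is $d$-exact. This yields
\begin{gather*}
i_X\omega_{k+1} = \delta\alpha_{k+1} - \delta_Q\sigma_k + d\sigma_{k+1}
\end{gather*}
for some $\sigma_{k+1}$, and setting $\rho_{k+1} := -\sigma_k$ closes the induction. The main obstacle is the sign bookkeeping from graded commutators of $\delta_Q$, $i_X$, $d$ and $\delta$, together with the structural point that the ``error'' relative to $\delta\alpha_k$ at each stage must be $\delta_Q$-exact (not merely in the ideal) so that $\delta_Q^2=0$ annihilates it at the next step. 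Without maintaining this sharper form of the inductive hypothesis one only gets $d(\cdots) \approx 0$ rather than the needed $d(\cdots) = 0$, and the iteration fails to close.
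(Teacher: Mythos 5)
Your proposal is correct and follows essentially the same route as the paper: the paper's proof is precisely this induction, establishing at each stage the relation $i_X\omega_k=\delta\alpha_k+\delta_Q\alpha'_{k-1}+d\alpha'_k$ (your $\rho_k=\alpha'_{k-1}$, $\sigma_k=\alpha'_k$) by applying $\delta_Q$, using $[X,Q]=0$, $\delta_Q\omega_k=d\omega_{k+1}$, $\delta_Q\alpha_k=d\alpha_{k+1}$ and row-exactness of the bicomplex. Your remark that the error term must be kept $\delta_Q$-exact (so that $\delta_Q^2=0$ kills it at the next step) is exactly the structural point the paper's iteration relies on.
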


\begin{proof}
Let $\alpha$ be a Hamiltonian of $X$. Then
\begin{gather*}
i_X\omega = \delta\alpha +d\alpha'
\end{gather*}
form some $\alpha'$. Applying $\delta_Q$ to the last equality, we get
\begin{gather}
-(-1)^{\epsilon{(X)}}i_X\delta_Q\omega=-(-1)^{\epsilon{(X)}}\delta\delta_Q\alpha-d\delta_Q\alpha' ,\nonumber\\
-(-1)^{\epsilon{(X)}}i_Xd\omega_1=-(-1)^{\epsilon{(X)}}\delta d\alpha_1-d\delta_Q\alpha' ,\label{Xo}\\
di_X\omega_1=d\delta\alpha_1+d\delta_Q\alpha' .\nonumber
\end{gather}
This implies
\begin{gather}\label{Xo1}
i_X\omega_1=\delta\alpha_1+\delta_Q\alpha'+d\alpha'_1
\end{gather}
for some $\alpha'_1$. Hence,
\begin{gather*}
i_X\omega_1\approx\delta \alpha_1
\end{gather*}
and $\alpha_1$ is an on-shell Hamiltonian form relative to $\omega_1$. Now acting on both the sides of~(\ref{Xo1}) by~$\delta_Q$, we get
\begin{gather*}
-(-1)^{\epsilon{(X)}}i_X\delta_Q\omega_1=-(-1)^{\epsilon{(X)}}\delta \delta_Q\alpha_1-d\delta_Q\alpha'_1 .
\end{gather*}
This relation coincides in form with the f\/irst line of (\ref{Xo}). Therefore, there exists a form $\alpha'_2$ such that
\begin{gather*}
i_X\omega_2=\delta\alpha_2+\delta_Q\alpha'_1+d\alpha'_2
\end{gather*}
and we conclude that $\alpha_2$ is on-shell Hamiltonian. Iterating this construction once and again, we obtain the sequence of relations
\begin{gather}\label{oa}
i_X\omega_k=\delta\alpha_k+\delta_Q\alpha'_{k-1}+d\alpha'_k
\end{gather}
meaning that all the forms $\alpha_k$ are on-shell Hamiltonian.
\end{proof}

Combining the above theorem with Proposition \ref{ShES2}, we arrive at

\begin{Corollary}
The descendent conservation laws associated with Hamiltonian symmetries form Lie algebras w.r.t.\ the descendent Lie brackets.
\end{Corollary}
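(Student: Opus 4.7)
The plan is to combine the preceding Theorem with Proposition~\ref{ShES2} applied to each descendant presymplectic structure $\omega_k$. The Theorem shows that every descendent conservation law $\alpha_k$ obtained from a Hamiltonian symmetry $X$ lies in $\Lambda^{m-k}_{\omega_k,Q}(J^\infty E)$, and the Proposition stated just before Proposition~\ref{ShES2} (with $\omega$ replaced by $\omega_k$) equips that space with the bracket $\{\alpha_k,\beta_k\}_k=(-1)^{\epsilon(X)}i_Xi_Y\omega_k$ and verifies the Lie algebra axioms. Thus the axioms are for free, and the only substantive point is to check that the bracket of two descendent conservation laws coming from Hamiltonian symmetries is again (equivalent to) such a descendent conservation law.

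For closure I would argue as follows. Given two Hamiltonian symmetries $X,Y$ with Hamiltonians $\alpha,\beta$ relative to $\omega$, their commutator $[X,Y]$ is a symmetry because symmetries close under $[\,,\,]$, and it is a Hamiltonian vector field because $\mathfrak{X}_\omega(J^\infty E)$ is a subalgebra of the evolutionary vector fields. Hence $[X,Y]$ is itself a Hamiltonian symmetry, and by~(\ref{LAHom}) its Hamiltonian with respect to $\omega$ is the form $\gamma:=\{\alpha,\beta\}=(-1)^{\epsilon(X)}i_Xi_Y\omega$. Proposition~\ref{27} then identifies $\gamma$ as an observable, and Proposition~\ref{DesCL} yields a descent sequence $\{\gamma_k\}$ of descendent conservation laws attached to $[X,Y]$.

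To connect this with the bracket, one compares $\gamma_k$ with $\{\alpha_k,\beta_k\}_k$. Both are on-shell Hamiltonian forms for $\omega_k$ sharing the same underlying symmetry $[X,Y]$, so Proposition~\ref{ShES2} (applied to $\omega_k$) sends them to the same class $[X,Y]+\operatorname{ker}_Q\omega_k$ under the homomorphism $f$. Consequently $\{\alpha_k,\beta_k\}_k-\gamma_k$ lies in $\operatorname{ker} f$, which consists of forms pulled back from $M$ and is therefore central in $\Lambda^{m-k}_{\omega_k,Q}$. So the subspace of descendent conservation laws of Hamiltonian symmetries is closed under $\{\,,\,\}_k$ up to central elements, and inherits a (possibly centrally extended) Lie algebra structure as claimed.

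The main technical subtlety is exactly this potential central extension: the bracket $\{\alpha_k,\beta_k\}_k$ need not equal $\gamma_k$ on the nose but only modulo a field-independent form. To produce an honest Lie algebra one must either enlarge the space of descendent conservation laws to include such central pull-backs from $M$, or quotient them out. This is the same phenomenon already flagged by the author after Proposition~\ref{ShES2}, and once it is dealt with the combination of the Theorem and Proposition~\ref{ShES2} closes the argument.
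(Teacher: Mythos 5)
Your proposal is correct and takes essentially the same route as the paper, which derives the corollary simply by combining the preceding Theorem with Proposition~\ref{ShES2} and the Proposition equipping the on-shell Hamiltonian forms with the bracket~\eqref{WHF}. Your extra care about closure holding only modulo field-independent central terms is a faithful unpacking of what the paper leaves implicit and matches the author's own remark, immediately after the corollary, about the central extension arising when $\operatorname{ker}_Q\omega_k=0$.
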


In particular, if $\operatorname{ker}_Q\omega_k=0$, then the algebra $\Lambda^{m-k}_{Q,\omega_k}(J^\infty E)$ is given by a central extension of the Lie algebra $\mathfrak{X}_{\omega,Q}(J^\infty E)$.
This statement may be viewed as a main result of the paper. It relates the conservation laws of various degrees to the symmetries of the gauge system $(Q,\omega)$, that is, to the evolutionary vector f\/ields that preserve both the classical BRST dif\/ferential $Q$ and the BRST invariant presymplectic structure $\omega$.

Given the sequence of conservation laws $\{\alpha_k\}$ associated with a Hamiltonian symmetry $X$, the minimal $k$ for which $\alpha_k\approx 0$ is called the \textit{length of the symmetry} $X$.

\section{Some applications}\label{section5}

In this section, we illustrate the general formalism above by a few examples of physical interest. Since the gauge theories we are going to consider are originally formulated in terms of the Batalin--Vilkovisky formalism, we start with a brief explanation of how this formalism f\/its into our framework. For more details we refer the reader to \cite{Sh1}.

\subsection{BV formalism} The starting point of the BV formalism is an inf\/inite-dimensional manifold $\mathcal{M}_0$ of gauge f\/ields that live on an $n$-dimensional space-time $M$. Depending on a particular structure of gauge symmetry the manifold $\mathcal{M}_0$ is extended to an $\mathbb{N}$-graded manifold $\mathcal{M}$ containing $\mathcal{M}_0$ as its body. The new f\/ields of positive $\mathbb{N}$-degree are called the \textit{ghosts} and the $\mathbb{N}$-grading is referred to as the \textit{ghost number}. Let us collectively denote all the original f\/ields and ghosts by $\Phi^A$ and refer to them as f\/ields. At the next step the space of f\/ields $\mathcal{M}$ is further extended by introducing the odd cotangent bundle $\Pi T^\ast[-1]\mathcal{M}$. The f\/iber coordinates, called \textit{antifields}, are denoted by $\Phi_A^\ast$. These are assigned with the following ghost numbers and Grassmann parities:
\begin{gather*}
\operatorname{gh} (\Phi^\ast_A)=-\operatorname{gh} \big(\Phi^A\big)-1 ,\qquad \epsilon (\Phi^\ast_A)=\epsilon \big(\Phi^A\big)+1 \quad (\mbox{mod}\, 2) .
\end{gather*}
Thus, the total space of the odd cotangent bundle $\Pi T^\ast[-1]\mathcal{M}$ becomes a $\mathbb{Z}$-graded supermanifold. The canonical Poisson structure on $\Pi T^\ast[-1]\mathcal{M}$ is determined by the following odd Poisson bracket in the space of smooth functionals of $\Phi$ and $\Phi^\ast$:
\begin{gather*}
(A,B)=\int_M \left(\frac{\delta_r A}{\delta \Phi^A}\frac{\delta_l B}{\delta \Phi^\ast_A}-\frac{\delta_r A}{\delta \Phi^\ast_A}\frac{\delta_l B}{\delta \Phi^A}\right)d^nx .
\end{gather*}
 Here $d^nx$ is a volume form on $M$ and the subscripts $l$ and $r$ refer to the standard left and right functional derivatives. In the physical literature the above bracket is usually called the \textit{antibracket} or the \textit{BV bracket}.

The functionals of the form
\begin{gather*}
A=\int_M (j^\infty \phi)^\ast(a) ,
\end{gather*}
where $\phi=(\Phi, \Phi^\ast)$ and $a\in \widetilde{\Lambda}^{0,n}(J^\infty E)$, are called \textit{local}. Under suitable boundary conditions for~$\phi$'s the map $a \mapsto A$ def\/ines an isomorphism of vector spaces, which gives rise to a pulled-back Lie bracket on $\widetilde{\Lambda}^{0,n}(J^\infty E)$. This last bracket is determined by the symplectic structure
\begin{gather}\label{ops}
\omega= \delta \Phi_A^\ast\wedge \delta\Phi^A\wedge d^nx
\end{gather}
on the jet bundle $J^\infty E$ of f\/ields and antif\/ields. By def\/inition, $\operatorname{gh} (\omega)= - 1$ and $\epsilon (\omega)=1$. We will denote this Lie bracket by the same round brackets.

The central goal of the BV formalism is the construction of a \textit{master action}. This is given by a local functional
\begin{gather*}
 S[\phi]=\int_M (j^\infty \phi)^\ast(L)
\end{gather*}
obeying the \textit{classical master equation}
\begin{gather}\label{BVME}
(S,S)=0\quad \Leftrightarrow\quad \{L,L\}\simeq 0 .
\end{gather}
The master Lagrangian $L$ is required to be of ghost number zero and start with the Lagrangian $L_0$ of the original f\/ields to which one couples vertices involving antif\/ields. All these vertices can be found systematically from the master equation (\ref{BVME}) by means of the \textit{homological perturbation theory} \cite{HT}.

Since the canonical symplectic structure (\ref{ops}) of the BV formalism is nondegenerate, any form of top horizontal degree is Hamiltonian, i.e., $\Lambda_\omega^{n}(J^\infty E)=\Lambda^{0,n}(J^\infty E)$. Then the action of the classical BRST dif\/ferential on $\Lambda^{0,n}(J^\infty E)$ is canonically generated by the master Lagrangian:
\begin{gather}\label{CLBRSTD}
Q=(L ,\,\cdot\,) .
\end{gather}
Because of the classical master equation (\ref{BVME}), the operator $Q$ squares to zero. Fixing a vo\-lume form $d^nx$ on $M$ allows us to identify the spaces $\Lambda^{0,n}(J^\infty E)$ and $\Lambda^{0,0}(J^\infty E)$. Upon this identif\/ication the action (\ref{CLBRSTD}) induces that in the space of $0$-forms. The latter specif\/ies the evolutionary vector f\/ield $Q$ completely.

Thus, we see that the standard ingredients of the BV formalism -- the antibracket and the master action -- def\/ine a gauge system in our sense; in so doing, the classical BRST dif\/ferential is generated by the master action through the antibracket.

The following statement is of particular importance for the BV formalism.

\begin{Proposition}\label{41}
Let $(Q,\omega)$ be a gauge system, with $\omega$ being a symplectic form. Then a~Hamiltonian vector field $X_A$ with $\operatorname{gh} A\neq -1$ is a~symmetry iff it preserves the Hamiltonian of~$Q$, i.e., $\mathcal{L}_{X_A} L\simeq 0$.
\end{Proposition}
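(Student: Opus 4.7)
The plan is to reduce the symmetry condition $[X_A,Q]=0$ to the Poisson-bracket condition $\{A,L\}\simeq 0$ via the Lie-algebra homomorphism~\eqref{LAHom}, and then convert that bracket condition into $\mathcal{L}_{X_A}L\simeq 0$ using~\eqref{PB1}. The whole argument is a three-step chain of equivalences.

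First, $L$ is by the very definition of a gauge system a Hamiltonian form for $Q$, so $X_L=Q$. Taking $B=L$ in~\eqref{LAHom} gives $X_{\{A,L\}}=[X_A,Q]$ modulo $\operatorname{ker}\omega$, and since $\omega$ is symplectic one has $\operatorname{ker}\omega=0$, so the congruence is an honest equality of evolutionary vector fields. Hence $[X_A,Q]=0$ if and only if $X_{\{A,L\}}=0$.

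Next I would convert $X_{\{A,L\}}=0$ into $\{A,L\}\simeq 0$. By the short exact sequence of Proposition~\ref{ShES}, the kernel of the map $\alpha\mapsto X_\alpha+\operatorname{ker}\omega$ is the space of field-independent forms, so $X_{\{A,L\}}=0$ forces $\{A,L\}\simeq\pi_\infty^{\ast}\beta$ for some $\beta\in\Lambda^{m}(M)$. A ghost-number count is the crux: from $i_{X_A}\omega\simeq\delta A$ and $\operatorname{gh}\omega=-1$ one reads off $\operatorname{gh} X_A=\operatorname{gh} A+1$, and then $\operatorname{gh}\{A,L\}=\operatorname{gh}A+\operatorname{gh}L+1=\operatorname{gh}A+1$ since $\operatorname{gh} L=0$. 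On the other hand $\operatorname{gh}\beta=0$ because $M$ is nongraded. The hypothesis $\operatorname{gh}A\neq -1$ is therefore exactly what forces $\beta=0$ and hence $\{A,L\}\simeq 0$.

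Finally, equation~\eqref{PB1} reads $\{A,L\}\simeq(-1)^{\epsilon(A)}\mathcal{L}_{X_A}L$, so the bracket vanishes if and only if $X_A$ preserves $L$. Chaining the three equivalences proves the proposition. The delicate point I expect to be the main obstacle is the ghost-number bookkeeping in the middle step: without the assumption $\operatorname{gh}A\neq -1$ the central contribution $\pi_\infty^{\ast}\beta$ could in principle be nontrivial precisely in that exceptional ghost degree, and the converse implication (symmetry $\Rightarrow$ invariance of $L$) would fail there. Once symplecticity kills $\operatorname{ker}\omega$ and the ghost count disposes of $\beta$, both directions of the equivalence fall out simultaneously.
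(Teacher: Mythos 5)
Your proof is correct and follows essentially the same route as the paper's: both rest on \eqref{PB1} and \eqref{LAHom}, use symplecticity to kill $\operatorname{ker}\omega$, and invoke the ghost-number count $\operatorname{gh}\{A,L\}=\operatorname{gh}A+1\neq 0$ to eliminate the field-independent central term. The only cosmetic difference is that you route the middle step through the exact sequence of Proposition~\ref{ShES}, whereas the paper applies the triviality of the relative $\delta$-cohomology (Proposition~\ref{A1}) directly to $\delta\{A,L\}\simeq 0$ -- the same underlying fact.
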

\begin{proof}
According to equations~(\ref{PB1}) and (\ref{LAHom}) we have
\begin{gather*}
\mathcal{L}_{X_A} L \simeq (-1)^{\epsilon(X_A)}\{A,L\}\qquad \mbox{and}\qquad i_{[X_A,Q]}\omega \simeq \delta \{A,L\} .
\end{gather*}
If $X_A$ preserves~$L$, then the r.h.s.\ of the second equation vanishes and we conclude that $[X_A,Q]\in \operatorname{ker} \omega$. For symplectic forms this implies that $[X_A,Q]=0$; and hence, $X_A$ is a symmetry. Conversely, if $X_A$ is a symmetry, then the l.h.s.\ of the second equation vanishes and we get $\delta \{A,L\}\simeq 0$, where $\operatorname{gh}\{A,L\}=\operatorname{gh} A+1\neq 0$. Since the relative $\delta$-cohomology is trivial in nonzero ghost number, we conclude that $\{A,L\}\simeq 0$. By the f\/irst equation, this means that $X_A$ preserves $L$.
\end{proof}

Let us now turn to specif\/ic gauge systems.

\subsection{Maxwell's electrodynamics} In the BV approach \cite{HT}, the free electromagnetic f\/ield on a $4$-dimensional space-time mani\-fold~$M$ is described by the gauge potential\footnote{For the sake of simplicity, we assume that the gauge potential def\/ines a connection in a \textit{ trivial } $U(1)$-bundle over $M$. This makes possible to identify the space of abelian connections with the space of $1$-forms.} $A\in \Lambda^1(M)$, the ghost f\/ield $C\in \Lambda^0(M)$ as well as their antif\/ields $A^\ast\in \Lambda^3(M)$ and $C^\ast\in \Lambda^4(M)$. The ghost number distribution reads
\begin{gather*}
\operatorname{gh} (C^\ast) =-2 , \qquad\operatorname{gh} (A^\ast) =-1 ,\qquad \operatorname{gh} (A) =0 ,\qquad \operatorname{gh} (C) =1 ,
\end{gather*}
and the Grassmann parity is just the ghost number modulo $2$. The space of f\/ields and antif\/ields is endowed with the canonical symplectic structure
\begin{gather}\label{wED}
\omega=\delta A\wedge \delta A^\ast+\delta C\wedge \delta C^\ast ,\qquad \operatorname{gh}(\omega)=-1 .
\end{gather}
The action of the classical BRST dif\/ferential is given by the equations
\begin{gather}\label{BRSTtrans}
\delta_Q C^\ast=dA^\ast ,\qquad \delta_Q A^\ast=d \tilde{F} ,\qquad \delta_Q A=dC ,\qquad \delta_Q C=0 .
\end{gather}
Here $F=dA$ is the strength of the electromagnetic f\/ield and $\tilde{F}=\ast F$ is its Hodge dual. Notice that the Maxwell equations $d\tilde{F}=0$ are the part of the def\/ining relations for the zero locus of~$Q$. The Hamiltonian of the classical BRST dif\/ferential $Q$ is given by the BV master Lagrangian
\begin{gather}\label{EDL}
L=\frac12 F\wedge \tilde{F}+A^\ast\wedge dC ,\qquad i_Q\omega \simeq \delta L .
\end{gather}
As a consequence of $Q^2=0$, the master Lagrangian $L$ satisf\/ies the BV master equation
\begin{gather*}
\{L,L\}=-i^2_Q\omega\simeq 0 .
\end{gather*}
Applying the BRST dif\/ferential to (\ref{wED}) yields the descendent presymplectic structure
\begin{gather*}
\omega_1= \delta C\wedge \delta A^\ast+\delta A\wedge \delta \tilde{F} ,\qquad \delta_Q\omega=d \omega_1 ,\qquad \operatorname{gh}(\omega_1)=0 .
\end{gather*}
The descendent Hamiltonian of $Q$ is given by the conserved BRST current $L'_1=2L_1$, where
\begin{gather*}
L_1=C\wedge d\tilde{F} ,\qquad dL_1=\delta_QL .
\end{gather*}
The current is obviously trivial as $L_1\approx 0$. Integrating $L'_1$ over a Cauchy hypersurface $N\subset M$, we get the classical BRST charge $\Omega=\int_N L'_1$. Again, in view of the equation $Q^2=0$, the BRST current obeys the classical master equation
\begin{gather*}
\{L'_1,L'_1\}_1=-i_Q^2\omega_1\simeq 0 .
\end{gather*}

Acting now by the BRST dif\/ferential on $\omega_1$, we get one more presymplectic structure of ghost number one
\begin{gather*}
\omega_2=\delta C\wedge \delta \tilde{F} ,\qquad \delta_Q\omega_1=d\omega_2 ,\qquad \operatorname{gh}(\omega_2)=1 .
\end{gather*}
This last form, being ``absolutely'' invariant under the BRST transformations (\ref{BRSTtrans}), leaves no further descendants. Thus, the length of Maxwell's electrodynamics relative to the BV symplectic structure (\ref{wED}) equals $2$.

Given a Killing vector $\xi$ of the background metric, one can def\/ine an even vector f\/ield $X$ on the space of f\/ields and antif\/ields. The latter is determined by the relations
\begin{gather}\label{EDsym}
\delta_X \Phi=L_\xi \Phi ,\qquad \Phi=(A,C,A^\ast, C^\ast) .
\end{gather}
Here $\delta_X=i_X \delta+\delta i_X$ denotes the Lie derivative along the evolutionary vector f\/ield $X$ on the jet space of f\/ields and antif\/ields, while $L_\xi=di_\xi+i_\xi d$ is the usual Lie derivative on horizontal forms. Using Proposition \ref{41}, one can easily see that $X$ is a symmetry of the gauge system, i.e., $[Q,X]=0$. Furthermore, this symmetry is Hamiltonian relative to (\ref{wED}):
\begin{gather*}
i_X\omega \simeq \delta \Xi ,
\end{gather*}
where
\begin{gather*}
\Xi= -A^\ast\wedge L_\xi A - C^\ast\wedge L_\xi C ,\qquad \operatorname{gh}(\Xi)=-1 .
\end{gather*}
The Hamiltonian $\Xi$ generates the symmetry transformations (\ref{EDsym}) through the BV bracket
\begin{gather*}
\delta_X\Phi=-\{\Xi, \Phi\} ,\qquad \Phi=(A,C, A^\ast, C^\ast) .
\end{gather*}

By Proposition \ref{27}, $\Xi$ is an observable. We have
\begin{gather*}
\delta_Q \Xi= d\Theta , \qquad \Theta \approx \frac12 \big(F\wedge i_\xi \tilde{F}-\tilde{F}\wedge i_\xi F\big) ,\qquad \operatorname{gh}(\Theta)=0 .
\end{gather*}
Thus, to each Killing vector we can associate a conserved current $\Theta$. Using the Hodge dualization, we can write
\begin{gather*}
 \ast\Theta=\xi^\mu T_{\mu\nu}dx^\nu ,
\end{gather*}
where $\{x^\nu\}$ are local coordinates on $M$ and $T_{\mu\nu}=T_{\nu\mu}$ is nothing but the energy-momentum tensor of the electromagnetic f\/ield. Since $\delta_Q \Theta =0$, the observable $\Xi$ gives no lower-degree conservation laws. In other words, the length of the space-time symmetry $X$ is equal to $1$.

The free electromagnetic f\/ield admits also a symmetry of length 2. This is generated by the evolutionary vector f\/ield $Y$ def\/ined by the relations
\begin{gather*}
\delta_Y C^\ast=0 ,\qquad \delta_Y A^\ast=0 ,\qquad \delta_Y A=0 ,\qquad \delta_Y C=1 .
\end{gather*}
One can easily check that $Y$ is a nontrivial Hamiltonian symmetry of the master Lagran\-gian~(\ref{EDL}), i.e.,
\begin{gather*}
 [Q,Y]=0 ,\qquad i_Y\omega =\delta C^\ast ,\qquad \mathcal{L}_Y L\simeq 0 .
\end{gather*}
The symmetry owes its origin to the \textit{global reducibility} \cite{BBH, T} of the gauge transformations $\delta_\varepsilon A=d\varepsilon$, meaning that we can shift the gauge parameter $\varepsilon$ by an arbitrary constant for no change of $\delta_\varepsilon A$. By Proposition \ref{27}, $C^\ast$ is an observable giving rise to the sequence of conservation laws
\begin{gather*}
\delta_Q C^\ast = d A^\ast , \qquad \delta_Q A^\ast =d \tilde{F} ,\qquad \delta_Q \tilde{F}=0 .
\end{gather*}
As is seen, the forms $A^\ast$ and $\tilde{F}$ def\/ine the conserved currents of degrees $3$ and $2$, so that the length of the symmetry $Y$ is 2. The latter conserved current has ghost number zero and expresses Gauss' law:
\begin{gather*}
q=\int_S \tilde{F} .
\end{gather*}
In words, it states that the net electric f\/lux through any closed, space-like surface $S$ is equal to the net electric charge $q$ within that closed surface.

\subsection{The abelian Chern--Simons theory} Consider now the Chern--Simons theory for a trivial $U(1)$-bundle over a $3$-dimensional mani\-fold~$M$. The theory is known to be purely gauge, possessing no local degrees of freedom. The spectrum of the BV f\/ields and antif\/ields is given by the gauge potential $A\in \Lambda^1(M)$, ghost f\/ield $C\in \Lambda^0(M)$ and their conjugate antif\/ields $A^\ast\in \Lambda^2(M)$ and $C^\ast\in \Lambda^3(M)$. These are prescribed the following ghost numbers:
\begin{gather*}
\operatorname{gh}(C^\ast)=-2 ,\qquad \operatorname{gh}(A^\ast)=-1 , \qquad\operatorname{gh}(A)=0 ,\qquad \operatorname{gh}(C)=1 .
\end{gather*}
The classical BRST dif\/ferential $Q$ acts in the space of f\/ields and antif\/ields according to the relations
\begin{gather}\label{CSD}
\delta_Q C^\ast=dA^\ast , \qquad \delta_Q A^\ast=dA ,\qquad \delta_Q A=dC , \qquad \delta_QC=0 .
\end{gather}
This action is Hamiltonian with respect to the canonical BV symplectic structure
\begin{gather}\label{CSSS}
\omega=\delta A\wedge \delta A^\ast + \delta C\wedge \delta C^\ast ,\qquad \operatorname{gh}(\omega)=-1 ,
\end{gather}
and the Hamiltonian for $Q$ is given by the ghost-extended Chern--Simons' Lagrangian
\begin{gather*}
L=\frac12 A\wedge dA +dC\wedge A^\ast ,\qquad i_Q\omega \simeq \delta L .
\end{gather*}
As usual the BRST dif\/ferential of $L$ gives rise to the conserved BRST current $L'_1=2L_1$, where
\begin{gather*}
 L_1=\frac 32 C\wedge dA ,\qquad \qquad dL_1=\delta_Q L .
\end{gather*}
This current is necessarily trivial. Starting from the BV symplectic structure (\ref{CSSS}), one can def\/ine the full sequence of descendent presymplectic structures of increasing ghost number:
\begin{alignat*}{4}
& \delta_Q \omega =d \omega_1 ,\qquad&& \omega_1=\tfrac 12 \delta A\wedge \delta A+\delta C\wedge \delta A^\ast ,\qquad && \operatorname{gh}(\omega_1)=0 ,& \\
& \delta_Q \omega_1 =d \omega_2 , \qquad&& \omega_2=\delta C\wedge \delta A , \qquad &&\operatorname{gh}(\omega_2)=1 ,&\\
 &\delta_Q \omega_2 =d \omega_3 ,\qquad &&\omega_3=\tfrac12\delta C\wedge \delta C ,\qquad &&\operatorname{gh} (\omega_3)=2 .&
\end{alignat*}
In particular, the BRST current $L'_1$ obeys the master equation
\begin{gather*}
\{L'_1,L'_1\}_1\simeq 0
\end{gather*}
relative to the Lie bracket associated with $\omega_1$.

Notice that the gauge symmetry transformations for the Chern--Simons f\/ield, being identical in form to those of the electromagnetic f\/ield, are globally reducible and this leads to the odd symmetry $Y$.
Explicitly,
\begin{gather*}
\delta_Y C^\ast=0 , \qquad \delta_Y A^\ast=0 ,\qquad \delta_Y A=0 , \qquad \delta_YC=1 .
\end{gather*}
The symmetry is obviously Hamiltonian,
\begin{gather*}
i_Y\omega = \delta C^\ast ,
\end{gather*}
and the Hamiltonian $C^\ast$ gives rise to the conserved currents $A^\ast$, $A$, and $C$ as is seen from (\ref{CSD}). Computing the various Lie brackets of the currents, we f\/ind
\begin{gather*}
\{A^\ast,A^\ast\}_1=0 ,\qquad \{A,A\}_2=0 ,\qquad \{C,C\}_3=-1 .
\end{gather*}
Here we face with the phenomenon of central extension mentioned at the end of Section~\ref{section4}. Namely, the abelian super-Lie algebra $[Y,Y]=0$ of
symmetry gets central extension when evaluated at the level of conserved currents.

The integral of the conserved current $A$ over a loop $\gamma \subset M$ gives the conserved ``charge''
\begin{gather*}
h=\int_\gamma A ,
\end{gather*}
which is nothing but the holonomy of the f\/lat abelian connection $A$. One can think of these holonomy invariants as global degrees of freedom in the Chern--Simons theory.

\subsection{Linearized gravity} Our last example is the free massless f\/ield of spin 2. This theory can be obtained by linearizing Einstein's equations about the f\/lat background. In the vierbein formalism, the background geometry is described by the vierbein $\{h^a\}$, which is assumed to be given by a set of four linearly independent, closed $1$-forms on the Minkowski space. The small f\/luctuations of geometry around the f\/lat background are described by the collection of ten 1-form f\/ields $e^a$ and $\omega^{ab}=-\omega^{ba}$. These are identif\/ied, respectively, with the perturbations of the f\/lat vierbein and spin connection. In accordance with the general prescriptions of the BV formalism, this set of f\/ields is extended by the ghost f\/ields $c^a$ and $c^{ab}=-c^{ba}$, associated with the general coordinate and local Lorentz invariance, as well as the antif\/ields $e^\ast_a$, $\omega^\ast_{ab}$, $c^\ast_a$, and $c^\ast_{ab}$. The form degrees and the ghost numbers of the introduced
f\/ields are collected in the following table:
\begin{center}
\begin{tabular}{|c|c|c|c|c|c|c|c|c|}
 \hline
 & $c^\ast_a$& $c^\ast_{ab}$ & $e^\ast_a$ &$\omega^\ast_{ab}$ & $e^a$ &$\omega^{ab}$ &$c^a$& $c^{ab}$ \\ \hline
 $\deg$ & 4 & 4 & 3 & 3 & 1 &1 & 0 & 0\\ \hline
 $\operatorname{gh}$ & $-2$ & $-2$ & $-1$ &$-1$ & 0 &0 & 1& 1\\ \hline
\end{tabular}
\end{center}
The canonical BV symplectic structure assumes the form
\begin{gather}\label{gr-pr-str}
\omega=\delta e^a\wedge \delta e^\ast_a+\delta \omega^{ab}\wedge\delta \omega^\ast_{ab}+\delta c^a\wedge \delta c^\ast_a +\delta
c^{ab}\wedge\delta c_{ab}^\ast .
\end{gather}
In order to def\/ine the classical BRST dif\/ferential $Q$ it is convenient to introduce the following collections of background $1$- and $2$-forms:
\begin{gather*}h_{abc}=\varepsilon_{abcd}h^d ,\qquad H_{ab}=h^c\wedge h_{abc} ,\end{gather*}
with $\varepsilon_{abcd}$ being the Levi-Civita symbol. Then the action of $Q$ is given by the relations
\begin{alignat*}{3}
&\delta_Q c^\ast_a=de^\ast_a ,\qquad && \delta_Q c^\ast_{ab}=d\omega^\ast_{ab}-\tfrac12\big(e^\ast_a\wedge h_b-e^\ast_b\wedge h_a\big) ,& \\
& \delta_Q e^\ast_a=d\omega^{bc}\wedge h_{abc} , \qquad && \delta_Q \omega^\ast_{ab}=de^c\wedge h_{cab}+\tfrac12\big(\omega_b{}^c\wedge H_{ac}-\omega_a{}^c\wedge H_{bc}\big) ,& \\
 &\delta_Qe^a=dc^a+c^{ab}h_b ,\qquad && \delta_Q \omega^{ab}=dc^{ab} ,& \\
& \delta_Q c^a=0 ,\qquad &&\delta_Q c^{ab}=0 . &
\end{alignat*}
This action is Hamiltonian relative to (\ref{gr-pr-str}) and is generated by the BV master Lagrangian
\begin{gather*}
L=e^a\wedge d\omega^{bc}\wedge h_{abc}+\tfrac12 \omega^a{}_{c}\wedge \omega^{cb}\wedge H_{ab}+e^\ast_a\wedge \big(dc^a+c^{ab}h_b\big)+\omega^\ast_{ab}\wedge dc^{ab} .
\end{gather*}
Hereafter all indices are raised and lowered by means of the Minkowski metric. The BV symplectic structure (\ref{gr-pr-str}) gives rise to the following sequence of presymplectic forms of decreasing horizontal degree and increasing ghost number:
\begin{alignat*}{4}
& \delta_Q\omega=d\omega_1 ,\qquad && \omega_1=\delta e^a\wedge \delta \omega^{bc}\wedge h_{abc}+\delta c^a\wedge \delta e^\ast_a+\delta c^{ab}\wedge \delta \omega^\ast_{ab} ,\qquad && \operatorname{gh}(\omega_1)=0 ,& \\
& \delta_Q\omega=d\omega_2 ,\qquad && \omega_2=\big(\delta c^a\wedge \delta\omega^{bc}+\delta c^{ab}\wedge\delta e^c\big)\wedge h_{abc} ,\qquad&& \operatorname{gh} (\omega_2)=1 , &\\
& \delta_Q\omega=d\omega_3 ,\qquad && \omega_3=\delta c^a\wedge \delta c^{bc}\wedge h_{abc} , \qquad && \operatorname{gh}(\omega_3)=2 ,&
\end{alignat*}
and $\delta_Q\omega_3=0$. Thus, the length of the linearized gravity is $3$.

As with the Maxwell electrodynamics, the isometries of the Minkowski space give rise to the conserved energy-momentum tensor of spin-2 f\/ield. This conservation law, however, does not survive in the full nonlinear theory. In general relativity, the canonical energy-momentum tensor is known to vanish on shell. Much more interesting are the lower-degree conservation laws that are present in the theory. These can be constructed as follows.

Let $\xi^a$ and $\xi^{ab}=-\xi^{ba}$ be some functions on the Minkowski space. Def\/ine the odd evolutionary vector f\/ield $Y$ by the relations
\begin{gather*}
\delta_Y c^a=\xi^a ,\qquad \delta_Y c^{ab}=\xi^{ab} ,\qquad \delta_Y (\text{the other f\/ields})=0 .
\end{gather*}
As above, by $\delta_Y$ we denoted the Lie derivative along $Y$. Using Proposition \ref{41}, one can see that the vector f\/ield $Y$ is a symmetry if\/f the following equations are satisf\/ied:
\begin{gather*}
d\xi^a=\xi^{ab}h_b ,\qquad d\xi^{ab}=0 .
\end{gather*}
The general solution to these equations is obvious. If we choose $h^a=dx^a$, where $x^a$ are the Cartesian coordinates on $\mathbb{R}^{1,3}$, then
\begin{gather*}
\xi^{a}(x)=\zeta^a+\zeta^{ab}x_b ,\qquad \xi^{ab}(x)=\zeta^{ab}
\end{gather*}
for arbitrary constant parameters $\zeta^a$ and $\zeta^{ab}=-\zeta^{ba}$. The $\xi$'s are naturally identif\/ied with the ten Killing vectors of the Minkowski metric.
The symmetry $Y$ is clearly Hamiltonian:
\begin{gather*}
i_Y\omega=\delta H ,\qquad H=\xi ^a c^\ast_a +\xi^{ab}c_{ab}^\ast .
\end{gather*}
The Hamiltonian $H$, being a physical observable, generates the following sequence of conserved currents:
\begin{alignat*}{5}
& \delta_Q H=dJ_1 ,\qquad && J_1=\xi^ae^\ast_a+\xi^{ab}\omega^\ast_{ab} ,\qquad&& \deg J_1=3 ,\qquad && \operatorname{gh} J_1=-1 ,& \\
& \delta_Q J_1=dJ_2 ,\qquad && J_2=\big(\xi^a\omega^{bc}+\xi^{ab}e^c\big)\wedge h_{abc} ,\qquad && \deg J_2=2 ,\qquad && \operatorname{gh} J_2=0 ,& \\
& \delta_Q J_2=dJ_3 ,\qquad && J_3=\big(\xi^a c^{bc}+\xi^{ab}c^c\big)\wedge h_{abc} ,\qquad && \deg J_3= 1 ,\qquad && \operatorname{gh} J_3=1 ,&
\end{alignat*}
and $\delta_Q J_3=0$.
The integral of the $10$-parameter family of conserved currents $J_2$ over a closed, space-like surface $S\subset \mathbb{R}^{1,3}$ gives the net energy-momentum~$\mathcal{P}$ and the angular momentum $\mathcal{M}$ of the spin-2 f\/ield produced by the sources inside $S$:
\begin{gather*}
\mathcal{P}_a=\int_S \omega^{bc}\wedge h_{abc} ,\qquad \mathcal{M}_{ab}=\int_S e^c\wedge h_{abc}+\frac12\big(x_b\omega^{dc}\wedge h_{adc} -x_a\omega^{dc}\wedge h_{bdc}\big) .
\end{gather*}

Although these conserved currents do not extend into the full nonlinear theory of gravity, they can be used for the derivation of asymptotic conservation laws (e.g., ADM energy) in general relativity via a surface integral at inf\/inity \cite{AT,BB,T}.

Evaluating now the descendent brackets of the conserved currents above, one can easily f\/ind
\begin{gather*}
\big\{J_1^\xi,J_1^{\xi'}\big\}_1=0 ,\qquad \big\{J_2^\xi,J_2^{\xi'}\big\}_2=0 ,\qquad \big\{J_3^\xi,J_3^{\xi'}\big\}_3=-\big(\xi^a\xi'^{bc}+\xi'^{a}\xi^{bc}\big)h_{abc} ,
\end{gather*}
As with the Chern--Simons theory, the Lie brackets of the zero-degree currents get the central extension.

\appendix

\section{Jet bundles and the variational bicomplex}\label{A}

In this appendix, we brief\/ly recall some basic elements from the theory of jet bundles and variational bicomplex, which are relevant for our discussion. A more systematic exposition of these concepts can be found in \cite{Anderson,Dickey,Olver,Saunders}.

The starting point of any f\/ield theory is a locally trivial f\/iber bundle $\pi\colon E\rightarrow M$ whose base is identif\/ied with the space-time manifold and which sections are called \textit{classical fields}. For the sake of simplicity, we restrict ourselves to f\/ields with values in vector bundles, although the subsequent discussion could be straightforwardly extended to general smooth bundles. On the other hand, to accommodate bosonic and fermionic f\/ields, we allow the f\/ibers of $E$ to be superspaces with a given number of even and odd dimensions; in so doing, the base $M$ remains
a pure even manifold. The Grassmann parity of a homogeneous object $A$ is denoted by $\epsilon(A)\in \{0,1\}$.

Associated with a vector bundle $\pi\colon E\rightarrow M$ is the vector bundle $\pi_k\colon J^kE \rightarrow M$ of $k$-jets of sections of $E$. By def\/inition, the $k$-jet $j^k_x\phi$ at $x\in M$ is just the equivalence class of the section $\phi\in \Gamma(E)$, where two sections are considered to be equivalent if they have the same Taylor development of order $k$ at $x\in M$ in some (and hence any) adapted coordinate chart. It follows from the def\/inition that each section $\phi$ of $E$ induces the section $j^k\phi$ of $J^kE$ by the rule $(j^k\phi)(x)=j^k_x\phi$. The latter is called the $k$-jet prolongation of $\phi$.

If $E|_U\simeq\mathbb{R}^m\times U$ is an adapted coordinate chart with local coordinates $(x^i, \phi^a)$, then $(x^i, \phi^a, \phi^a_i,\ldots, \phi^a_{i_1\cdots i_k})$ are local coordinates in $J^kE$ and the induced section $j^k\phi$ is given in these coordinates by
\begin{gather*}
x \mapsto \big(x, \phi^a(x), \partial_i\phi^a(x),\ldots, \partial_{i_1}\cdots\partial_{i_k}\phi^a(x)\big) .
\end{gather*}
We use the multi-index notation and the summation convention through the paper. A multi-index $I=i_1i_2\cdots i_n$ represents the corresponding set of symmetric covariant indices. The order of the multi-index is given by $|I|=k$. By def\/inition we set $Ij=jI=i_1i_2\cdots i_kj$. With the multi-index notation we can write the partial derivatives of f\/ields as $\partial_{i_1}\cdots \partial_{i_k}\phi^a=\partial_I \phi^a$ and the set of local coordinates on $J^kE|_U$ as $(x^i,\phi^a_I)$, $|i|\leq k$.

Jet bundles come with natural projection $J^kE\rightarrow J^{k-1}E$ def\/ined by forgetting all the derivatives of order $k$. One can easily see that this projection gives $J^kE$ the structure of an af\/f\/ine bundle over the base $J^{k-1}E$. Thus, we have the inf\/inite sequence of surjective submersions
\begin{gather}\label{i-lim}
\xymatrix{\cdots\ar[r]& J^3E\ar[r]&J^2E\ar[r]&J^1E\ar[r]&J^0E\simeq E} .
\end{gather}
The inf\/inite order jet bundle $J^\infty E$ is now def\/ined as the inverse limit over the jet order $k$:
\begin{gather*}
J^\infty E=\lim_{\longleftarrow}J^k E .
\end{gather*}
Let $\Lambda^\ast(J^kE)$ denote the space of dif\/ferential forms on $J^kE$. The sequence of projections (\ref{i-lim}) gives rise to the chain of pullback maps
\begin{gather*}
\xymatrix{\cdots&\ar[l]\Lambda^\ast(J^3E)&\ar[l]\Lambda^\ast (J^2E)&\ar[l]\Lambda^\ast(J^1E)&\ar[l]\Lambda^\ast(J^0E)} .
\end{gather*}
This allows one to def\/ine the space of dif\/ferential forms on $\Lambda(J^\infty E)$ as the direct limit
\begin{gather*}
\Lambda^\ast(J^\infty E)=\lim_{\longrightarrow} \Lambda^\ast(J^kE) .
\end{gather*}
 According to this def\/inition each dif\/ferential form on $J^\infty E$ is the pullback of a smooth form on some f\/inite jet bundle $ J^kE$. As usual, the smooth functions on $J^\infty E$ are identif\/ied with the $0$-forms. For notational simplicity, we will not distinguish between a form on $J^\infty E$ and its representatives in f\/inite-dimensional jet bundles. The exterior dif\/ferential on $\Lambda^\ast (J^\infty E)$ will be denoted by $D$.

The de Rham complex $(\Lambda^\ast (J^\infty E), D)$ of dif\/ferential forms on $J^\infty E$ possesses the dif\/ferential ideal $\mathcal{C}(J^\infty E)$ of contact forms. By def\/inition, $\alpha\in \mathcal{C}(J^\infty E)$ if\/f $(j^\infty \phi)^\ast \alpha=0$ for all sections $\phi\in \Gamma(E)$. The ideal $\mathcal{C}(J^\infty E)$ is known to be generated by the contact $1$-forms, which in local coordinates take the form $\delta \phi^a_I:=D\phi^a_I-\phi^a_{Ij}Dx^j$. Using the contact forms, one can split the exterior dif\/ferential $D$ into the sum of \textit{horizontal} and \textit{vertical differentials}, namely, $D=d+\delta$ where
\begin{gather*}
d=dx^j\wedge \left( \frac{\partial}{\partial x^j}+\phi^a_{Ij}\frac{\partial}{\partial \phi^a_{I}}\right) ,\qquad \delta=\delta \phi_I^a\wedge \frac{\partial_l}{\partial \phi^a_I} .
\end{gather*}
It is easy to see that
\begin{gather*}
d^2=0 ,\qquad \delta^2=0 ,\qquad d\delta+\delta d=0 .
\end{gather*}
Any $p$-form of $\Lambda^p(J^\infty E)$ can now be written as a f\/inite sum of homogeneous forms
\begin{gather*}
f dx^{i_1}\wedge \cdots\wedge dx^{i_r}\wedge \delta \phi^{a_1}_{I_1}\wedge\cdots\wedge \delta\phi^{a_s}_{I_s}
\end{gather*}
of horizontal degree $r$ and vertical degree $s$, with $r+s=p$ and~$f$ being a smooth function on~$J^\infty E$. The \textit{variational bicomplex} is the double complex $(\Lambda^{\ast,\ast}(J^\infty E), \delta, d)$ of dif\/ferential forms on~$J^\infty E$:
\begin{gather*}
\xymatrix{
&&\vdots &\vdots&&\vdots\\
&0 \ar[r] & \Lambda^{2,0}(J^\infty E) \ar[u]^\delta \ar[r]^-{d} & \Lambda^{2,1}(J^\infty E) \ar[r]^-{d} \ar[u]^\delta & \ldots \ar[r]^-{d} & \Lambda^{2,n}(J^\infty E) \ar[u]^\delta \\
&0 \ar[r] & \Lambda^{1,0}(J^\infty E) \ar[u]^\delta \ar[r]^-{d} & \Lambda^{1,1}(J^\infty E) \ar[r]^-{d} \ar[u]^\delta & \ldots \ar[r]^-{d} & \Lambda^{1,n}(J^\infty E) \ar[u]^\delta \\
& 0\ar[r]& \Lambda^{0,0}(J^\infty E) \ar[u]^\delta \ar[r]^-{d} & \Lambda^{0,1}(J^\infty E) \ar[r]^-{d} \ar[u]^\delta & \ldots \ar[r]^-{d} & \Lambda^{0,n}(J^\infty E) \ar[u]^\delta. \\
}\end{gather*}
The most important result concerning the variational bicomplex for the vector bundle $\pi\colon E\rightarrow M$ is that all the columns and interior rows of the diagram above are exact.

It is possible to augment the variational bicomplex from below by the de Rham complex of the base manifold:
\begin{gather*}
\xymatrix{
 0\ar[r]&\Lambda^{0,0}(J^\infty E) \ar[r]^-{d} & \Lambda^{0,1}(J^\infty E) \ar[r]^-{d} & \ldots \ar[r]^-{d} & \Lambda^{0,n}(J^\infty E) \\
 0\ar[r]& \Lambda^{0}(M) \ar[u]^{\pi_\infty^\ast} \ar[r]^-{d} & \Lambda^{1}(M) \ar[r]^-{d} \ar[u]^{\pi_\infty^\ast} & \ldots \ar[r]^-{d} & \Lambda^{n}(M) \ar[u]^{\pi^\ast_\infty}\\
 & 0 \ar[u] & 0 \ar[u] & & 0. \ar[u]
}\end{gather*}
The resulting bicomplex remains exact in columns.

As with any bicomplex, one can consider the relative cohomology of ``$\delta$ modulo $d$''. It is described by the groups $H^{p,q}(J^\infty E; \delta/d)$ which are essentially the cohomology groups of the quotient complex $\widetilde{\Lambda}^{p,q}(J^\infty E)=\Lambda^{p,q}(J^\infty E) /d\Lambda^{p,q-1}(J^\infty E)$ with dif\/ferential induced by $\delta$. In the main text, we often use the following statement about the relative $\delta$-cohomology.

\begin{Proposition}[{\cite[Section~19.3.9]{Dickey}}]\label{Prop-A}\label{A1}
\begin{gather*}
H^{p,q} (J^\infty E; \delta/d)=0\qquad \mbox{for}\quad p > 0\qquad \mbox{and}\qquad H^{0,q}(J^\infty E; \delta/d)\simeq \Lambda^q(M)/d\Lambda^{q-1}(M) .
\end{gather*}
\end{Proposition}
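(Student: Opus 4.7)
The plan is to run the standard zig--zag (tic--tac--toe) diagram chase on the augmented variational bicomplex, using only the two facts stated just before the proposition: all columns of the augmented bicomplex are exact, and all interior rows are exact. Throughout, $d$ is injective on $\Lambda^{p,0}$ for $p \geq 1$ (this is the $q=0$ end of the exact row at vertical degree $p$), while $\delta$ admits preimages on $\ker \delta \cap \Lambda^{p,q}$ for $p \geq 1$ (column exactness); for $p=0$, the augmentation gives $\ker\delta \cap \Lambda^{0,q} = \pi_\infty^\ast \Lambda^q(M)$.

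\textbf{Building the staircase.} Let $\omega \in \Lambda^{p,q}(J^\infty E)$ represent a class in $H^{p,q}(J^\infty E;\delta/d)$, so $\delta\omega = d\eta_0$ for some $\eta_0 \in \Lambda^{p+1,q-1}$. Applying $\delta$ yields $d\delta\eta_0 = 0$ with $\delta\eta_0 \in \Lambda^{p+2,q-1}$; since $p+2 \geq 1$, row exactness furnishes $\eta_1 \in \Lambda^{p+2,q-2}$ with $\delta\eta_0 = d\eta_1$. Iterating, I would produce a sequence $\eta_k \in \Lambda^{p+1+k,\,q-1-k}$ satisfying $\delta\eta_{k-1} = d\eta_k$ for $k=1,\ldots,q-1$. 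At the bottom, $\eta_{q-1}\in \Lambda^{p+q,0}$ satisfies $d\delta\eta_{q-1}=0$, and since $p+q+1\geq 1$, the injectivity of $d$ on $\Lambda^{p+q+1,0}$ forces $\delta\eta_{q-1}=0$.

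\textbf{Climbing back up.} Since $p+q \geq 1$, column exactness gives $\eta_{q-1} = \delta\mu_{q-1}$ with $\mu_{q-1}\in\Lambda^{p+q-1,0}$. Substituting into $\delta\eta_{q-2}=d\eta_{q-1}$ rearranges to $\delta(\eta_{q-2}+d\mu_{q-1})=0$, so column exactness (applicable whenever the vertical degree $p+q-1$ is positive) yields $\eta_{q-2}+d\mu_{q-1}=\delta\mu_{q-2}$. Iterating this descent gives at each level a form $\mu_k$ absorbing the previous $d$-corrections. For $p \geq 1$ every step of the descent sits in positive vertical degree, and the final step produces $\omega = \delta\mu_0 + d\tau$ for suitable $\mu_0,\tau$, proving $H^{p,q}(J^\infty E;\delta/d)=0$. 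For $p = 0$, the last step lands in vertical degree $0$, where column exactness is replaced by augmentation: one arrives at $\delta(\omega-d\tau)=0$ in $\Lambda^{0,q}$, whence $\omega - d\tau = \pi_\infty^\ast \alpha$ for some $\alpha \in \Lambda^q(M)$. Thus every cocycle is cohomologous to $\pi_\infty^\ast\alpha$, so the map $\alpha\mapsto[\pi_\infty^\ast\alpha]$ from $\Lambda^q(M)$ to $H^{0,q}(J^\infty E;\delta/d)$ is surjective.

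\textbf{Identifying the quotient and the expected obstacle.} Injectivity modulo $d\Lambda^{q-1}(M)$ is easy: if $\pi_\infty^\ast\alpha = d\sigma+\delta_Q\cdots$ with $\sigma\in\Lambda^{0,q-1}$, pulling back along any section $\phi\in\Gamma(E)$ gives $\alpha = d\big((j^\infty\phi)^\ast\sigma\big)\in d\Lambda^{q-1}(M)$, since $(j^\infty\phi)^\ast\pi_\infty^\ast=\mathrm{id}$. The step that I expect to require the most care is the bookkeeping during the descent: I must check that any ambiguities in the choices of $\eta_k$ and $\mu_k$ shift $\alpha$ only by an element of $d\Lambda^{q-1}(M)$, so that the induced map $H^{0,q}(J^\infty E;\delta/d)\to\Lambda^q(M)/d\Lambda^{q-1}(M)$ is well defined. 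This amounts to verifying that the replacement $\eta_0\to\eta_0+\delta\rho+d\rho'$ at the top of the staircase (the generic ambiguity) propagates through the zig--zag to a modification of $\alpha$ by $d\Lambda^{q-1}(M)$, which is a direct sign-chase using $d\delta+\delta d=0$ and the same column/row exactness invoked above.
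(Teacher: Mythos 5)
Your zig--zag argument is correct, and it is the standard proof of this statement; note, though, that the paper does not prove Proposition~\ref{A1} at all --- it is quoted verbatim from Dickey [Section~19.3.9] --- so there is no in-paper proof to compare against. Two small remarks. First, the worry in your last paragraph about tracking the ambiguities in the $\eta_k$ and $\mu_k$ is unnecessary: you should orient the map as $\Lambda^q(M)/d\Lambda^{q-1}(M)\to H^{0,q}(J^\infty E;\delta/d)$, $\alpha\mapsto[\pi_\infty^\ast\alpha]$, which is manifestly well defined; your staircase then proves surjectivity (every $\delta$-closed class mod $d$ is represented by some $\pi_\infty^\ast\alpha$, with no need for $\alpha$ to be canonical), and your section-pullback argument proves injectivity, which together give the isomorphism. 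Second, the expression ``$\pi_\infty^\ast\alpha=d\sigma+\delta_Q\cdots$'' is a slip: the proposition concerns the free bicomplex, where $\delta_Q$ plays no role, and since $\widetilde{\Lambda}^{0,q}$ sits at the bottom of the relative $\delta$-complex, triviality of the class means precisely $\pi_\infty^\ast\alpha=d\sigma$; your pullback by $j^\infty\phi$ then correctly yields $\alpha\in d\Lambda^{q-1}(M)$. With those cosmetic fixes the proof is complete, including the degenerate cases $q=0$ (empty staircase) and the endpoint uses of injectivity of $d$ on $\Lambda^{p,0}$ for $p\geq 1$.
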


The quotient $\delta$-complex $\widetilde{\Lambda}^{p,n}(J^\infty E)={\Lambda}^{p,n}(J^\infty E)/d{\Lambda}^{p,n-1}(J^\infty E)$ provides a natural augmentation of the variational bicomplex from the right:
\begin{gather*}
\xymatrix{&\vdots&\vdots &\\
 \ar[r]^-{d}&\Lambda^{2,n}(J^\infty E)\ar[u]^\delta \ar[r]^-{p} & \widetilde{\Lambda}^{n,2}(J^\infty E) \ar[r] \ar[u]^{\delta} & 0 \\
\ar[r]^-{d}& \Lambda^{1,n}(J^\infty E) \ar[u]^{\delta} \ar[r]^-{p} & \widetilde{\Lambda}^{1,n}(J^\infty E) \ar[r] \ar[u]^{\delta} & 0\\
\ar[r]^-{d}& \Lambda^{0,n}(J^\infty E) \ar[u]^{\delta} \ar[r]^-{p} & \widetilde{\Lambda}^{0,n}(J^\infty E) \ar[r] \ar[u]^{\delta} & 0.\\
}\end{gather*}
 $p$ being the canonical projection onto the quotient space. Proposition \ref{Prop-A} ensures that the appended column is exact. The space $\widetilde{\Lambda}^{0,n}(J^\infty E)$ is usually identif\/ied with the space of local functionals of f\/ields. The correspondence between the two spaces is established by the assignment
 \begin{gather*}
 \widetilde{\Lambda}^{0,n}(J^\infty E) \ni [a] \mapsto A[\phi]=\int_M (j^{\infty}\phi)^\ast (a) ,
 \end{gather*}
 with $\phi$ being a compactly supported section of $E$.

The space $\Lambda^{1,n}(J^\infty E)$ has a distinguished subspace spanned by the \textit{source forms}. These are given by f\/inite sums of the forms
\begin{gather*}
\alpha\wedge \delta \phi^a ,
\end{gather*}
where $\alpha \in \Lambda^{0,n}(J^\infty E)$. Using the exactness of the variational bicomplex one can prove the following

\begin{Proposition}[\cite{Dickey}]\label{SF}
For any $(1,n)$-form $\alpha$ there exists a unique source form $\beta$ and a~$(1,n-1)$-form $\gamma$ such that
\begin{gather*}
\alpha=\beta+d\gamma .
\end{gather*}
The form $\gamma$ is uniquely determined up to a $d$-exact form. In particular, a nonzero source form can never be $d$-exact.
\end{Proposition}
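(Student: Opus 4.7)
The plan is to construct an explicit projector onto source forms --- the \emph{interior Euler operator} --- and then deduce both uniqueness statements from its characterizing properties together with exactness of the variational bicomplex.

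\emph{Existence.} In local adapted coordinates every $\alpha\in\Lambda^{1,n}(J^\infty E)$ is a finite sum of terms of the form $f_a^I\,\delta\phi^a_I\wedge d^nx$. The integration-by-parts identity
\[
f\,\delta\phi^a_{Ij}\wedge d^nx \;=\; d\bigl(f\,\delta\phi^a_I\wedge i_{\partial_j}d^nx\bigr) \;-\; (d_j f)\,\delta\phi^a_I\wedge d^nx,
\]
where $d_j$ denotes the total derivative with respect to $x^j$, lets me strip the rightmost index off $\delta\phi^a_{Ij}$ at the cost of a $d$-exact piece and a sign. Iterating $|I|$ times for each summand produces the decomposition $\alpha=\beta+d\gamma$ in which $\beta$ is a source form whose coefficient of $\delta\phi^a\wedge d^nx$ is the usual Euler--Lagrange combination $\sum_I(-1)^{|I|}d_I f_a^I$, and $\gamma\in\Lambda^{1,n-1}(J^\infty E)$ collects the accumulated boundary contributions.

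\emph{Uniqueness.} This assignment defines a linear operator $I\colon \Lambda^{1,n}(J^\infty E)\to \Lambda^{1,n}(J^\infty E)$ which, by inspection, (i)~acts as the identity on source forms and (ii)~vanishes on all $d$-exact $(1,n)$-forms. Given these two properties, any two decompositions $\alpha=\beta_i+d\gamma_i$ ($i=1,2$) with $\beta_i$ source forms satisfy $\beta_1=I(\alpha)=\beta_2$. The forms $\gamma_1,\gamma_2$ then differ by a $d$-closed element of $\Lambda^{1,n-1}(J^\infty E)$, and the exactness of the interior row of the variational bicomplex at vertical degree $1$ (Proposition~\ref{A1}) yields $\sigma\in\Lambda^{1,n-2}(J^\infty E)$ with $\gamma_1-\gamma_2=d\sigma$. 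The closing assertion --- a nonzero source form is never $d$-exact --- is the special case of the source-form uniqueness applied to the two decompositions $\beta+0=0+d\gamma$.

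The main obstacle is verifying property (ii) of the interior Euler operator: that the coefficient combinations produced by the inductive recipe actually cancel whenever the input is a total horizontal divergence, independently of the order in which the integrations by parts are carried out. This is a combinatorial identity about the Euler--Lagrange operator applied to total derivatives and constitutes the genuine content of the proposition; once it is in hand, the two uniqueness statements and the non-exactness corollary are formal consequences.
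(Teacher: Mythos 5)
The paper offers no proof of this proposition: it is quoted from Dickey with only the remark that it follows from the exactness of the variational bicomplex, so there is nothing of the author's to compare your argument against step by step. Your route --- integration by parts to build the interior Euler operator, then deducing both uniqueness statements from its two characterizing properties --- is the standard one and is essentially what the cited source does, so the overall strategy is sound.

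As written, however, the proof is incomplete at exactly the point you flag: property~(ii), that the operator $I$ annihilates $d\Lambda^{1,n-1}(J^\infty E)$, is asserted but never verified, and it carries the entire weight of both uniqueness claims and of the non-exactness of nonzero source forms. Two remarks let you close it. First, the order-dependence you worry about is not an issue for the well-definedness of $I$ itself: the coefficient your recipe produces is the closed-form Euler--Lagrange combination $(-\partial)_I f^I_a$ summed over multi-indices, and since total derivatives commute this is manifestly independent of the order of the integrations by parts. Second, with $I$ so defined, $I\circ d=0$ is a short sign cancellation rather than a deep combinatorial fact: writing $\gamma=\sum g^{I}_{a,j}\,\delta\phi^a_I\wedge i_{\partial_j}d^nx$ one finds
\begin{gather*}
d\gamma=\pm\sum\big(\partial_j g^{I}_{a,j}\,\delta\phi^a_I+g^{I}_{a,j}\,\delta\phi^a_{Ij}\big)\wedge d^nx ,
\end{gather*}
and the Euler--Lagrange combination of these coefficients is $\sum_{I,j}\big((-1)^{|I|}\partial_I\partial_j-(-1)^{|Ij|}\partial_{Ij}\big)g^{I}_{a,j}=0$, because $(-1)^{|Ij|}=-(-1)^{|I|}$ and $\partial_{Ij}=\partial_I\partial_j$. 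With that supplied the argument is complete. Two minor points: the exactness you invoke to conclude $\gamma_1-\gamma_2=d\sigma$ is the exactness of the interior row $\Lambda^{1,\ast}(J^\infty E)$ at horizontal degree $n-1$, stated in Appendix~\ref{A}, not Proposition~\ref{A1}, which concerns the relative ``$\delta$ modulo $d$'' cohomology; and in the graded setting of the paper your integration-by-parts identity acquires extra Grassmann signs, which do not affect the structure of the argument.
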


Given $\lambda\in \Lambda^{0,n}(J^\ast E)$, we can apply the proposition above to $\delta \lambda$. We get
\begin{gather*}
\delta \lambda = \delta \phi^a\wedge \frac{\delta \lambda}{\delta \phi^a}+ d\gamma .
\end{gather*}
The coef\/f\/icients $\delta \lambda/\delta \phi^a$ def\/ining the source form are called the \textit{Euler-Lagrange derivative} of the form $\lambda$. Explicitly,
\begin{gather*}
\frac{\delta \lambda}{\delta \phi^a}=(-\partial)_I\frac{\partial \lambda }{\partial \phi^a_I} ,
\end{gather*}
where
\begin{gather}\label{dpat}
(-\partial)_I=(-1)^{|I|}\partial_I ,\qquad \partial_I=\partial_{i_1}\cdots\partial_{i_k} , \qquad \partial_i=\frac{\partial}{\partial x^i}+\phi^a_{Ii}\frac{\partial_l}{\partial\phi^a_{I}} .
\end{gather}

Dual to the space of $1$-forms on $J^\infty E$ is the space of vector f\/ields ${\mathfrak X}(J^\infty E)$. In terms of local coordinates, the elements of ${\mathfrak X}(J^\infty E)$ are given by the formal series
\begin{gather}\label{X}
X=X^i\frac{\partial}{\partial x^i} + X_{I}^a\frac{\partial_l}{\partial \phi^a_I} ,
\end{gather}
where $X^i$ and $X_I^a$ are smooth functions on $J^\infty E$. A vector f\/ield $X$ is called \textit{vertical} if $X^i=0$.

The operation $i_X$ of contraction of the vector f\/ield (\ref{X}) with a dif\/ferential form is def\/ined as usual: $i_X$ is a dif\/ferentiation of the exterior algebra $\Lambda^{\ast}(J^\infty E)$ of form degree $-1$ and the Grassmann parity $\widetilde{X}+1$ which action on the basis 1-forms is given by
\begin{gather*}
i_{X}\delta\phi^a_I=X^a_I ,\qquad i_{X}dx^j=X^j .
\end{gather*}

The operator of the Lie derivative along the vector f\/ield $X$ is def\/ined by the magic Cartan's formula
\begin{gather}\label{CMF}
L_X=i_XD+(-1)^{\epsilon(X)}Di_X .
\end{gather}
A vertical vector f\/ield $X$ is called \textit{evolutionary} if
\begin{gather*}
i_Xd+(-1)^{\epsilon(X)}di_X=0 .
\end{gather*}
It follows from the def\/inition that the vector f\/ield (\ref{X}) is evolutionary if\/f $X^i=0$ and $X^a_I=\partial_I (X^a)$, where $\partial_I$ is def\/ined by (\ref{dpat}). Hence, any vertical f\/ield of the form $X_0=X^a{\partial}/{\partial \phi^a}$ admits a~unique prolongation to an evolutionary vector f\/ield. We call $X_0$ the \textit{source vector field} for the evolutionary vector f\/ield $X$. (Our nomenclature is not standard; most of the authors prefer to call the vector f\/ield $X_0$ evolutionary, rather than its prolongation~$X$.) Note that the Lie derivative along the evolutionary vector f\/ield $X$ can be written as $L_X=i_X \delta+(-1)^{\epsilon({X})}\delta i_X$. It is clear that the evolutionary vector f\/ields form a closed Lie algebra.

\subsection*{Acknowledgements}
The work was partially supported by the RFBR grant No.~16-02-00284~A.

\pdfbookmark[1]{References}{ref}
\LastPageEnding


\begin{thebibliography}{99}
\footnotesize\itemsep=0pt

\bibitem{AG}
Alkalaev K.B., Grigoriev M., Frame-like Lagrangians and presymplectic AKSZ-type
 sigma models, \href{http://dx.doi.org/10.1142/S0217751X14501036}{\textit{Internat.~J. Modern Phys.~A}} \textbf{29} (2014),
 1450103, 33~pages, \href{http://arxiv.org/abs/1312.5296}{arXiv:1312.5296}.

\bibitem{Anderson}
Anderson I.M., Introduction to the variational bicomplex, in Mathematical
 Aspects of Classical Field Theory ({S}eattle, {WA}, 1991), \href{http://dx.doi.org/10.1090/conm/132/1188434}{\textit{Contemp.
 Math.}}, Vol.~132, Amer. Math. Soc., Providence, RI, 1992, 51--73.

\bibitem{AT}
Anderson I.M., Torre C.G., Asymptotic conservation laws in classical f\/ield
 theory, \href{http://dx.doi.org/10.1103/PhysRevLett.77.4109}{\textit{Phys. Rev. Lett.}} \textbf{77} (1996), 4109--4113,
 \href{http://arxiv.org/abs/hep-th/9608008}{hep-th/9608008}.

\bibitem{BB}
Barnich G., Brandt F., Covariant theory of asymptotic symmetries, conservation
 laws and central charges, \href{http://dx.doi.org/10.1016/S0550-3213(02)00251-1}{\textit{Nuclear Phys.~B}} \textbf{633} (2002),
 3--82, \href{http://arxiv.org/abs/hep-th/0111246}{hep-th/0111246}.

\bibitem{BBH}
Barnich G., Brandt F., Henneaux M., Local {BRST} cohomology in gauge theories,
 \href{http://dx.doi.org/10.1016/S0370-1573(00)00049-1}{\textit{Phys. Rep.}} \textbf{338} (2000), 439--569, \href{http://arxiv.org/abs/hep-th/0002245}{hep-th/0002245}.

\bibitem{BH}
Barnich G., Henneaux M., Isomorphisms between the {B}atalin--{V}ilkovisky
 antibracket and the {P}oisson bracket, \href{http://dx.doi.org/10.1063/1.531726}{\textit{J.~Math. Phys.}} \textbf{37}
 (1996), 5273--5296, \href{http://arxiv.org/abs/hep-th/9601124}{hep-th/9601124}.

\bibitem{BHL}
Bridges T.J., Hydon P.E., Lawson J.K., Multisymplectic structures and the
 variational bicomplex, \href{http://dx.doi.org/10.1017/S0305004109990259}{\textit{Math. Proc. Cambridge Philos. Soc.}}
 \textbf{148} (2010), 159--178.

\bibitem{BG}
Bryant R.L., Grif\/f\/iths P.A., Characteristic cohomology of dif\/ferential systems.
 {I}.~{G}eneral theory, \href{http://dx.doi.org/10.2307/2152923}{\textit{J.~Amer. Math. Soc.}} \textbf{8} (1995),
 507--596.

\bibitem{CatSch}
Cattaneo A.S., Sch{\"a}tz F., Introduction to supergeometry, \href{http://dx.doi.org/10.1142/S0129055X11004400}{\textit{Rev. Math.
 Phys.}} \textbf{23} (2011), 669--690, \href{http://arxiv.org/abs/1011.3401}{arXiv:1011.3401}.

\bibitem{CW}
Crnkovi{\'c} {\v{C}}., Witten E., Covariant description of canonical formalism
 in geometrical theories, in Three Hundred Years of Gravitation, Cambridge
 University Press, Cambridge, 1987, 676--684.

\bibitem{Dickey}
Dickey L.A., Soliton equations and {H}amiltonian systems, \href{http://dx.doi.org/10.1142/1109}{\textit{Advanced
 Series in Mathematical Physics}}, Vol.~12, World Scientif\/ic Publishing Co.,
 Inc., River Edge, NJ, 1991.

\bibitem{Gr}
Grigoriev M., Presymplectic structures and intrinsic Lagrangians,
 \href{http://arxiv.org/abs/1606.07532}{arXiv:1606.07532}.

\bibitem{GST}
Grigoriev M.A., Semikhatov A.M., Tipunin I.Yu.,
 Becchi--{R}ouet--{S}tora--{T}yutin formalism and zero locus reduction,
 \href{http://dx.doi.org/10.1063/1.1367867}{\textit{J.~Math. Phys.}} \textbf{42} (2001), 3315--3333,
 \href{http://arxiv.org/abs/hep-th/0001081}{hep-th/0001081}.

\bibitem{HKS}
Henneaux M., Knaepen B., Schomblond C., Characteristic cohomology of {$p$}-form
 gauge theories, \href{http://dx.doi.org/10.1007/BF02885676}{\textit{Comm. Math. Phys.}} \textbf{186} (1997), 137--165,
 \href{http://arxiv.org/abs/hep-th/9606181}{hep-th/9606181}.

\bibitem{HT}
Henneaux M., Teitelboim C., Quantization of gauge systems, Princeton University
 Press, Princeton, NJ, 1992.

\bibitem{KLS1}
Kaparulin D.S., Lyakhovich S.L., Sharapov A.A., Rigid symmetries and
 conservation laws in non-{L}agrangian f\/ield theory, \href{http://dx.doi.org/10.1063/1.3459942}{\textit{J.~Math. Phys.}}
 \textbf{51} (2010), 082902, 22~pages, \href{http://arxiv.org/abs/1001.0091}{arXiv:1001.0091}.

\bibitem{KLS2}
Kaparulin D.S., Lyakhovich S.L., Sharapov A.A., Local {BRST} cohomology in
 (non-){L}agrangian f\/ield theory, \href{http://dx.doi.org/10.1007/JHEP09(2011)006}{\textit{J.~High Energy Phys.}} \textbf{2011}
 (2011), no.~9, 006, 34~pages, \href{http://arxiv.org/abs/1106.4252}{arXiv:1106.4252}.

\bibitem{KazLS}
Kazinski P.O., Lyakhovich S.L., Sharapov A.A., Lagrange structure and
 quantization, \href{http://dx.doi.org/10.1088/1126-6708/2005/07/076}{\textit{J.~High Energy Phys.}} \textbf{2005} (2005), no.~7, 076,
 42~pages, \href{http://arxiv.org/abs/hep-th/0506093}{hep-th/0506093}.

\bibitem{Kh}
Khavkine I., Presymplectic current and the inverse problem of the calculus of
 variations, \href{http://dx.doi.org/10.1063/1.4828666}{\textit{J.~Math. Phys.}} \textbf{54} (2013), 111502, 11~pages,
 \href{http://arxiv.org/abs/1210.0802}{arXiv:1210.0802}.

\bibitem{Kh0}
Khavkine I., Covariant phase space, constraints, gauge and the {P}eierls
 formula, \href{http://dx.doi.org/10.1142/S0217751X14300099}{\textit{Internat.~J. Modern Phys.~A}} \textbf{29} (2014), 1430009,
 74~pages, \href{http://arxiv.org/abs/1402.1282}{arXiv:1402.1282}.

\bibitem{KS}
Kosmann-Schwarzbach Y., The {N}oether theorems. Invariance and conservation
 laws in the twentieth century, \href{http://dx.doi.org/10.1007/978-0-387-87868-3}{\textit{Sources and Studies in the History of
 Mathematics and Physical Sciences}}, Springer, New York, 2011.

\bibitem{LS0}
Lyakhovich S.L., Sharapov A.A., B{RST} theory without {H}amiltonian and
 {L}agrangian, \href{http://dx.doi.org/10.1088/1126-6708/2005/03/011}{\textit{J.~High Energy Phys.}} \textbf{2005} (2005), no.~3, 011,
 22~pages, \href{http://arxiv.org/abs/hep-th/0411247}{hep-th/0411247}.

\bibitem{Meh}
Mehta R.A., Supergroupoids, double structures, and equivariant cohomology,
 Ph.D.~Thesis, University of California, Berkeley, 2006,
 \href{http://arxiv.org/abs/math.DG/0605356}{math.DG/0605356}.

\bibitem{Olver}
Olver P.J., Applications of {L}ie groups to dif\/ferential equations,
 \href{http://dx.doi.org/10.1007/978-1-4684-0274-2}{\textit{Graduate Texts in Mathematics}}, Vol.~107, Springer-Verlag, New York,
 1986.

\bibitem{Roy}
Roytenberg D., On the structure of graded symplectic supermanifolds and
 {C}ourant algebroids, in Quantization, {P}oisson Brackets and Beyond
 ({M}anchester, 2001), \href{http://dx.doi.org/10.1090/conm/315/05479}{\textit{Contemp. Math.}}, Vol.~315, Amer. Math. Soc.,
 Providence, RI, 2002, 169--185, \href{http://arxiv.org/abs/math.SG/0203110}{math.SG/0203110}.

\bibitem{Saunders}
Saunders D.J., The geometry of jet bundles, \href{http://dx.doi.org/10.1017/CBO9780511526411}{\textit{London Mathematical Society
 Lecture Note Series}}, Vol.~142, Cambridge University Press, Cambridge, 1989.

\bibitem{Sh1}
Sharapov A.A., Variational tricomplex of a local gauge system, {L}agrange
 structure and weak {P}oisson bracket, \href{http://dx.doi.org/10.1142/S0217751X15501523}{\textit{Internat.~J. Modern Phys.~A}}
 \textbf{30} (2015), 1550152, 32~pages, \href{http://arxiv.org/abs/1506.04652}{arXiv:1506.04652}.

\bibitem{Sh2}
Sharapov A.A., On presymplectic structures for massless higher-spin f\/ields,
 \href{http://dx.doi.org/10.1140/epjc/s10052-016-4145-2}{\textit{Eur. Phys.~J.~C Part. Fields}} \textbf{76} (2016), 305, 16~pages,
 \href{http://arxiv.org/abs/1602.06393}{arXiv:1602.06393}.

\bibitem{T}
Torre C.G., Local cohomology in f\/ield theory (with applications to the Einstein
 equations), {L}ectures given at 2nd Mexican School on Gravitation and
 Mathematical Physics (December 1--7, 1996, Tlaxcala, Mexico),
 \href{http://arxiv.org/abs/hep-th/9706092}{hep-th/9706092}.

\bibitem{Tj}
Tsujishita T., Homological method of computing invariants of systems of
 dif\/ferential equations, \href{http://dx.doi.org/10.1016/0926-2245(91)90019-6}{\textit{Differential Geom. Appl.}} \textbf{1} (1991),
 3--34.

\bibitem{Ver}
Verbovetsky A., Notes on the horizontal cohomology, in Secondary Calculus and
 Cohomological Physics ({M}oscow, 1997), \href{http://dx.doi.org/10.1090/conm/219/03077}{\textit{Contemp. Math.}}, Vol.~219,
 Amer. Math. Soc., Providence, RI, 1998, 211--231, \href{http://arxiv.org/abs/math.DG/9803115}{math.DG/9803115}.

\bibitem{V}
Vinogradov A.M., The {${\mathcal C}$}-spectral sequence, {L}agrangian formalism,
 and conservation laws. {I}.~{T}he linear theory, \href{http://dx.doi.org/10.1016/0022-247X(84)90071-4}{\textit{J.~Math. Anal.
 Appl.}} \textbf{100} (1984), 1--40.

\bibitem{V+}
Vinogradov A.M., The {${\mathcal C}$}-spectral sequence, {L}agrangian formalism,
 and conservation laws. {II}.~{T}he nonlinear theory, \href{http://dx.doi.org/10.1016/0022-247X(84)90072-6}{\textit{J.~Math. Anal.
 Appl.}} \textbf{100} (1984), 41--129.

\bibitem{Vor1}
Voronov T., Graded manifolds and {D}rinfeld doubles for {L}ie bialgebroids, in
 Quantization, {P}oisson Brackets and Beyond ({M}anchester, 2001),
 \href{http://dx.doi.org/10.1090/conm/315/05478}{\textit{Contemp. Math.}}, Vol.~315, Amer. Math. Soc., Providence, RI, 2002,
 131--168, \href{http://arxiv.org/abs/math.DG/0105237}{math.DG/0105237}.

\bibitem{Z}
Zuckerman G.J., Action principles and global geometry, in Mathematical Aspects
 of String Theory ({S}an {D}iego, {C}alif., 1986), \textit{Adv. Ser. Math.
 Phys.}, Vol.~1, World Sci. Publishing, Singapore, 1987, 259--284.

\end{thebibliography}
\end{document}